\DeclareMathOperator{\Tr}{Tr}
\DeclareMathOperator{\rank}{rank}
\newcommand{\sterling}[2]{\genfrac{[}{]}{0pt}{}{#1}{#2}} %
\newtheorem{dfn}{Definition}
\newtheorem{prop}{Proposition}
\newtheorem{lem}{Lemma}
\newtheorem{thm}{Theorem}
\newtheorem{cor}{Corollary}
\newtheorem{rem}{Remark}
\begin{document}

\title{Port-based telecloning of an unknown quantum state}
\author{Reiji Okada}
\email{reiji.okada@nagoya-u.jp}
\affiliation{Graduate School of Informatics, Nagoya University, Furo-cho, Chikusa-ku, Nagoya 464-8601, Japan}

\author{Kohtaro Kato}
\email{kokato@i.nagoya-u.ac.jp}
\affiliation{Graduate School of Informatics, Nagoya University, Furo-cho, Chikusa-ku, Nagoya 464-8601, Japan}

\author{Francesco Buscemi}
\email{buscemi@nagoya-u.jp}
\affiliation{Graduate School of Informatics, Nagoya University, Furo-cho, Chikusa-ku, Nagoya 464-8601, Japan}

\begin{abstract}
Telecloning is a protocol introduced by Murao \textit{et al.} [Phys. Rev. A \textbf{59}, 156 (1999)] to distribute copies of an unknown quantum state to many receivers in a way that beats the trivial ``clone-and-teleport'' protocol. In the last decade, a new type of teleportation called \textit{port-based teleportation}, in which the receiver can recover the state simply by looking at the correct port without having to actively perform correction operations, has been widely studied. In this paper, we consider the analog of telecloning, in which conventional teleportation is replaced by the port-based variant. To achieve this, we generalize the optimal measurement used in port-based teleportation and develop a new one that achieves port-based telecloning. Numerical results show that, in certain cases, the proposed protocol is strictly better than the trivial clone-and-teleport approach.
\end{abstract}

\maketitle

\section{Introduction}
Quantum teleportation~\cite{bennett1993teleporting} is one of the basic protocols in quantum communication, allowing the transmission of quantum information from one location to another without physically moving the particles carrying the quantum state and instead using only local operations and classical communications with preshared quantum entanglement. Quantum teleportation is used in quantum repeaters~\cite{briegel1998quantum} and is essential for the realization of long-distance quantum communication. The standard version of teleportation (ST) requires the receiver to actively perform a unitary correction on its system, depending on the classical information received from the sender.  

Port-based teleportation (PBT) is an alternative type of quantum teleportation proposed by Ishizaka and Hiroshima~\cite{ishizaka2008asymptotic, ishizaka2009quantum} that uses a multipartite entangled state whose subsystems are called ports. Unlike ST, PBT does not require the receiver to actively perform a unitary transformation; instead, the teleportation process is completed simply by selecting one of the multiple ports depending on the sender's measurement result and discarding the others. This feature enables PBT to be applied to universal programmable quantum processors~\cite{ishizaka2008asymptotic}. However, the no-programming theorem~\cite{nielsen1997programmable} shows that faithful and deterministic universal programmable quantum processors cannot be realized in a finite-dimensional system. Therefore, PBT succeeds only approximately or probabilistically. Despite these limitations, PBT has also been applied in areas such as instantaneous nonlocal quantum computation~\cite{beigi2011simplified} and communication complexity and Bell nonlocality~\cite{buhrman2016quantum}. There has been extensive research on the performance~\cite{wang2016higher, studzinski2017port, mozrzymas2018optimal, christandl2021asymptotic, leditzky2022optimality} and algorithms~\cite{fei2023efficient, grinko2023gelfand, grinko2023efficient,wills2024efficient} of PBT.

Telecloning, proposed by Murao \textit{et al.}~\cite{murao1999quantum, murao2000quantum}, is a protocol that generalizes teleportation with the goal of distributing a single unknown input state to many distant receivers. Since perfect copying of an unknown quantum state is forbidden by the no-cloning theorem~\cite{Park-no-cloning,wootters1982single,DIEKS1982271}, telecloning aims to transfer optimal clones instead~\cite{Buzek-Hillery-cloning,werner1998optimal}.

Existing telecloning protocols are based on ST and thus require receivers to actively perform unitary transformations to complete the protocol. In this paper, we introduce port-based telecloning (PBTC), which combines telecloning with (multi-)PBT and allows the transmission of copies of an unknown quantum state without requiring active corrections by the receivers. To this end, we generalize the measurement used in PBT and propose a new one that, when used on maximally entangled resource states, asymptotically achieves the fidelity of the optimal universal cloning protocol~\cite{werner1998optimal}. Interestingly, numerical results show that when the number of ports is small, PBTC achieves a transmission fidelity that is \textit{strictly higher} than that achievable by the naive method of simply performing optimal cloning and PBT sequentially. 
Throughout the study, we consider deterministic protocols using maximally entangled resource states. Room to investigate optimal resource states and a probabilistic variant of PBTC remains.  See Sec.~\ref{subsec:Port-based teleportation} for details.

The structure of this paper is as follows. In Sec.~\ref{sec:Preliminaries}, we summarize the necessary concepts of PBT and telecloning. In Sec.~\ref{sec:Port-based telecloning}, we introduce PBTC and explain its protocol. After discussing the generalization of the positive operator-valued measure (POVM) and its asymptotic optimality, we compare the performance with the trivial protocol. In Sec.~\ref{sec:Conclusion}, we provide a summary and discuss open questions.

\section{Preliminaries}\label{sec:Preliminaries}
\subsection{Port-based teleportation}\label{subsec:Port-based teleportation}

\begin{figure}[t]
    \centering
    \begin{minipage}{0.49\columnwidth}
        \centering
        \includegraphics[width=\linewidth, trim=0mm 65mm 130mm 0mm, clip]{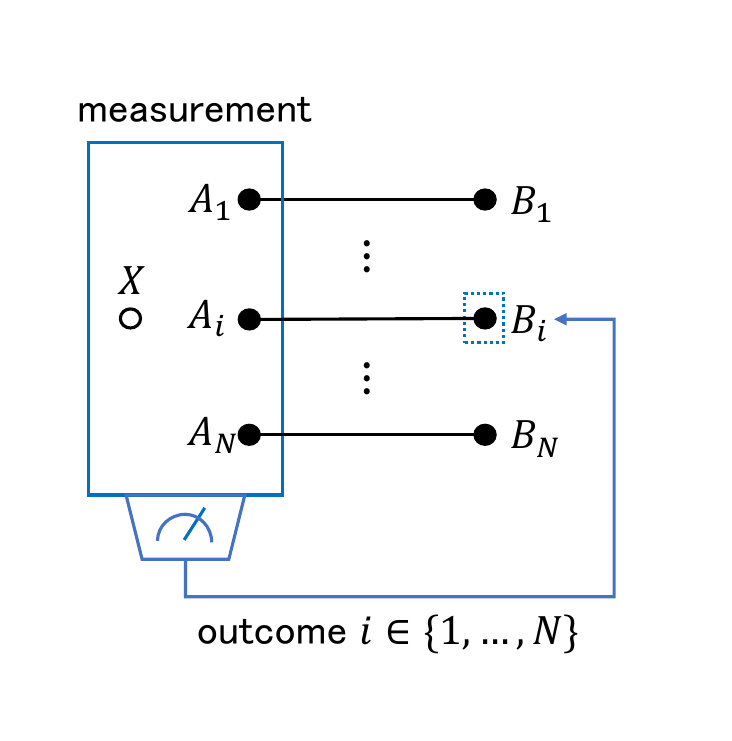}
        \smallskip
        (a) port-based teleportation
        \label{fig:PBT}
    \end{minipage}
    \hfill
    \begin{minipage}{0.49\columnwidth}
        \centering
        \includegraphics[width=\linewidth, trim=0mm 65mm 130mm 0mm, clip]{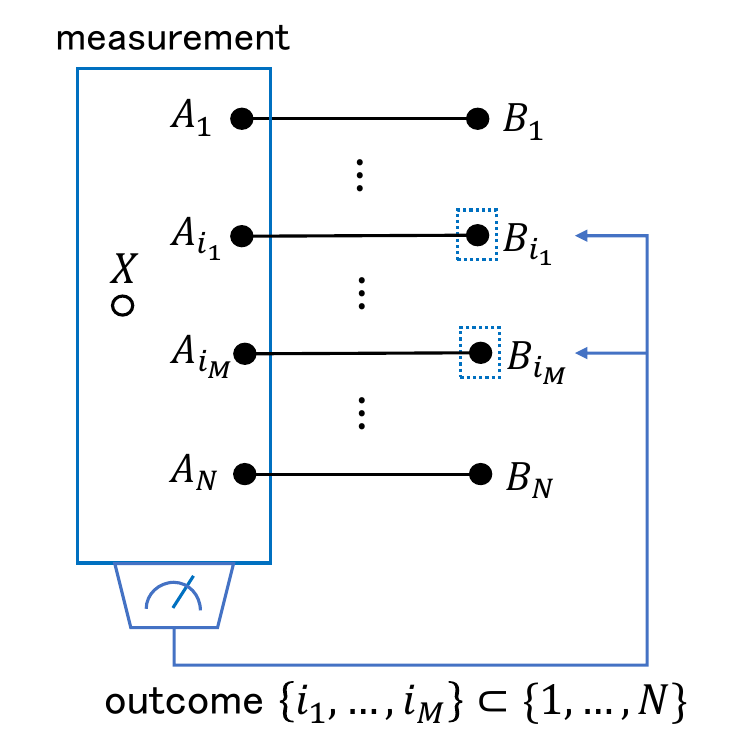}
        \smallskip
        (b) port-based telecloning
        \label{fig:PBTC}
    \end{minipage}
    \caption{(a) The setting of port-based teleportation (PBT). Alice and Bob share entangled states in ports $A_1,\ldots,A_N$ and $B_1,\ldots,B_N$, with Alice also holding the input state in the system $X$. In the protocol, Alice first measures systems $XA_1 \cdots A_N$ and sends the outcome $i\in\{1,\ldots,N\}$ to Bob, who then selects port $B_i$. This completes the protocol. (b) The setting of port-based telecloning (PBTC). PBTC is best thought of as having $N$ receivers, where the $i$th receiver has port $B_i\ (i=1,\ldots,N)$, and the goal of the protocol is to have $M$ receivers get approximate copies of the state of system $X$. In the protocol, Alice performs a measurement with a multi-index outcome, i.e., a subset $\{i_1,\ldots,i_M\}\subset\{1,\ldots,N\}$, and sends the outcome to all receivers. Each receiver keeps their port if their port number is contained in Alice's outcome. This completes the protocol, and an approximate copy of the input state of $X$ is transferred to $M$ receivers.}
    \label{fig:PBT and PBTC}
\end{figure}
For a finite-dimensional Hilbert space $\mathcal{H}$, $\mathcal{L}(\mathcal{H})$ denotes the space of linear operators on $\mathcal{H}$. 
In PBT, Alice (the sender) and Bob (the receiver) share an entangled state $\Phi_{A^NB^N}\in\mathcal{L}\big(\mathcal{H}^{\otimes N}\otimes\mathcal{H}^{\otimes N}\big)$ on $2N$ qudits, and Alice has the input pure state $\sigma_X\in\mathcal{L}(\mathcal{H}_X)$. Here, system $A^N\equiv A_1\cdots A_N\ (B^N\equiv B_1\cdots B_N)$ is on Alice's (Bob's) side. Each system $A$ is associated with a finite-dimensional Hilbert space $\mathcal{H}_A\cong\mathbbm{C}^d$. The shared entangled state $\Phi_{A^NB^N}$ is called a \textit{resource state}, and each of systems $A_1,\ldots,A_N$ and $B_1,\ldots,B_N$ is referred to as a \textit{port}. 

PBT is mainly classified into \textit{deterministic} PBT (dPBT) and \textit{probabilistic} PBT (pPBT). Although dPBT always succeeds, faithful transfer can be achieved only asymptotically. On the other hand, pPBT always achieves faithful transmission, but the unit success probability of the protocol is only asymptotically achieved. In this work, we focus on only the deterministic version, so we refer to dPBT simply as PBT. The protocol of PBT is as follows:
\begin{itemize}
    \item[(1)]Alice jointly measures the input system $X$ and all her ports $A^N$ with a POVM $\{E^i_{XA^N}\}_{i=1}^N$.
    \item[(2)]Alice relays the outcome $i\in\{1,\ldots,N\}$ to Bob via classical communication.
    \item[(3)]Bob selects the port $B_i$ and discards all other ports $B^N\setminus B_i$.
\end{itemize}
This completes the protocol, and the state is transferred to the remaining port $B_i$. The PBT channel $\mathcal{E}_N:\mathcal{L}(\mathcal{H}_X)\rightarrow\mathcal{L}(\mathcal{H}_B)$ is then expressed as follows:
\begin{align}
    \mathcal{E}_N(\sigma_X)=\sum_{i=1}^N\Tr_{XA^{N}B_i^c}\left[E^i_{XA^{N}}\left(\sigma_X\otimes\Phi_{A^NB^N}\right)\right],
\end{align}
where $B_i^c\equiv B^N\setminus B_i=B_1\cdots B_{i-1}B_{i+1}\cdots B_N$ and the remaining system $B_i$ is relabeled as output system $B$. The left side of Fig.~\ref{fig:PBT and PBTC} represents PBT~\cite{ishizaka2008asymptotic}.

The performance of PBT is evaluated by entanglement fidelity. The entanglement fidelity $F$ of a quantum channel $\mathcal{N}:\mathcal{L}(\mathcal{H})\rightarrow\mathcal{L}(\mathcal{\mathcal{H}})$ is defined as follows:
\begin{align}
  F(\mathcal{N})\coloneq\bra{\Phi^+}\left(\mathcal{N}\otimes \mathrm{id}\right)\left(\Phi^+\right)\ket{\Phi^+},
\end{align}
where $\Phi^+=\ket{\Phi^+}\bra{\Phi^+}$ is the maximally entangled state, defined by $\ket{\Phi^+}=\frac{1}{\sqrt{d}}\sum_{i=1}^d\ket{i}\ket{i}$ for the orthonormal basis $\{\ket{i}\}_{i=1}^d$, and $\mathrm{id}$ is the identity channel. Entanglement fidelity is related to average output fidelity $f$, which is defined as follows:
  \begin{align} f(\mathcal{N})\coloneq\int\mathrm{d}\phi\bra{\phi}\mathcal{N}(\phi)\ket{\phi},
\end{align}
where the integral is performed with respect to the uniform distribution $\mathrm{d}\phi$ over all input pure state. These two quantities are connected by the following relationship~\cite{horodecki1999general}:
\begin{align}\label{eq:f and F}
  f(\mathcal{N})=\frac{F(\mathcal{N})d+1}{d+1}.
\end{align}

An important class of POVMs in PBT is the pretty good measurement (PGM; also known as a square-root measurement)~\cite{belavkin1975optimal, hausladen1994pretty}. The PGM $\{E^i\}_i$ for the state ensemble $\{(p_i,\sigma^i)\}_{i\in\mathcal{I}}$ is given by
\begin{align}\label{eq:PGM}
  E^i=\bar{\sigma}^{-\frac{1}{2}}p_i\sigma^i\bar{\sigma}^{-\frac{1}{2}},
\end{align}
where $\bar{\sigma}=\sum_{i\in\mathcal{I}}p_i\sigma^i$ and $\bar{\sigma}^{-1}$ is defined on the support of $\bar{\sigma}$, which can always be assumed to be invertible, without loss of generality. Note that the sum of $E^i$ defined by \eqref{eq:PGM} is the projection onto the support of $\bar{\sigma}$, which generally does not coincide with the identity operator. One way to make them POVMs in the full Hilbert space is to add
\begin{align}
\label{eq:delta}
\Delta=\frac{1}{|\mathcal{I}|}\left(\mathbbm{1}-\sum_{i\in\mathcal{I}}E^i\right)
\end{align}
to each $E^i$. This does not change the argument in PBT  since $\Tr[\Delta\sigma^i]=0$.

References~\cite{ishizaka2009quantum, leditzky2022optimality} showed that the POVM that maximizes the fidelity of PBT is the PGM constructed from the ensemble $\{(1/N,\ \rho^i_{XA^N})\}_{i=1}^N$, where
\begin{align}\label{eq:signal state of PBT}
  \rho^i_{XA^N}\coloneq\Phi^+_{XA_i}\otimes\frac{1}{d^{N-1}}\mathbbm{1}_{A_i^c}\;.
\end{align}

A PBT protocol that uses $N$ pairs of maximally entangled states as ports and the PGM for $\{(1/N,\ \rho^i_{XA^N})\}_{i=1}^N$ as a measurement is called \textit{standard} PBT. The entanglement fidelity of the standard PBT channel $\mathcal{E}^\mathrm{std}_N$ is computed as follows~\cite{christandl2021asymptotic}:
\begin{align}
  F(\mathcal{E}^\mathrm{std}_N)=1-\frac{d^2-1}{4N}+O\left(N^{-\frac{3}{2}+\delta}\right),
\end{align}
where $\delta>0$. In addition, we can consider using any resource state and are not limited to maximally entangled states. Even in this case, it is known that the PGM that is the same as the POVM used in standard PBT is optimal (i.e., maximizing entanglement fidelity)~\cite{leditzky2022optimality}. If we denote a PBT channel using the PGM and optimal resource states as $\mathcal{E}^\mathrm{opt}_N$, the performance of such a channel can be expressed as follows~\cite{christandl2021asymptotic}:
\begin{align}
    F(\mathcal{E}^\mathrm{opt}_N)=1-\Theta(N^{-2}).
\end{align}

\subsection{Telecloning}\label{subsec:Telecloning}
Telecloning~\cite{murao1999quantum, murao2000quantum} is a generalization of ST to the case of many receivers. The objective of telecloning is to distribute one input state to many distant receivers. However, the no-cloning theorem~\cite{wootters1982single} prohibits making multiple perfect copies of a single unknown state. The best a sender can do is to transfer an \textit{optimal clone} that is the closest to the original state allowed by quantum mechanics. In the following, we will focus on symmetric cloning, i.e., the situation where there is no difference between the copies that each recipient receives.

The $K\rightarrow M$ optimal cloning map $\mathcal{C}:\mathcal{L}(\mathcal{H}^{\otimes K})\rightarrow\mathcal{L}(\mathcal{H}^{\otimes M})$ given by Werner~\cite{werner1998optimal} is obtained by projecting $K$ input copies and $M-K$ completely mixed states onto a symmetric subspace:
\begin{align}\label{eq:optimal cloning map}
  \mathcal{C}\left(\ket{\phi}\bra{\phi}^{\otimes K}\right)=\frac{d[K]}{d[M]}\Pi_M\left(\ket{\phi}\bra{\phi}^{\otimes K}\otimes\mathbbm{1}^{\otimes M-K}\right)\Pi_M,
\end{align}
where $d[K]=\binom{d+K-1}{K}$ and $\Pi_M$ is the projection onto the totally symmetric subspace of  $\mathcal{H}^{\otimes M}$. The state in~\eqref{eq:optimal cloning map} also optimizes the fidelity of each clone~\cite{keyl1999optimal}, which is written as follows:
\begin{align}
    \mathcal{R}\circ\mathcal{C}\left(\ket{\phi}\bra{\phi}^{\otimes K}\right)=\gamma_{K,M}\ket{\phi}\bra{\phi}+(1-\gamma_{K,M})\frac{1}{d}\mathbbm{1},
\end{align}
where $\gamma_{K,M}=\frac{K}{M}\frac{M+d}{K+d}$ and $\mathcal{R}$ represents the trace over all subsystems except the first one (due to exchange symmetry all clones are equal). Furthermore, since $\mathcal{C}$ is \textit{universal} (i.e., the fidelity does not depend on the input pure states), the fidelity for $K=1$ is given as follows:
\begin{align}
    f(\mathcal{R}\circ\mathcal{C})&=\gamma_{1,M}+(1-\gamma_{1,M})\frac{1}{d}\nonumber
    \\&=\frac{d+2M-1}{M(d+1)}.\label{eq:optimal fidelity}
\end{align}

There is a straightforward protocol to transfer optimal clones to many receivers. That is, Alice applies the optimal cloning map locally and transfers its output to each receiver by ST. If the number of clones is $M$, this protocol requires $M\log_2d$ ebits. Unlike this ``clone-and-teleport'' protocol, the protocol introduced in~\cite{murao1999quantum,murao2000quantum} performs cloning and teleportation simultaneously. An advantage of this protocol is that it requires only $(\log_2d)$-ebit entanglement between the sender and the receivers. This is achieved by using the $2M$-qudit entangled state, called a \textit{telecloning state}, which is shared between Alice and the $2M-1$ receivers (each participant has one qudit). The protocol is as follows:
\begin{enumerate}
    \item[(1)]Alice performs a complete $d$-dimensional Bell measurement on the input state and her entangled state.
    \item[(2)]Alice relays the outcome to all receivers via classical communication.
    \item[(3)]Each receiver applies an appropriate unitary transformation based on Alice's measurement outcome.
\end{enumerate}
This completes the protocol, and $M$ receivers obtain the optimal clone of the input state. This is possible because the universal cloning is covariant under the action of the unitary group.

\section{Port-based telecloning}\label{sec:Port-based telecloning}
\subsection{The protocol}
In this section, we introduce PBTC, which performs telecloning using PBT. The goal of PBTC is to distribute $M$ copies of the state of the input system across $N$ ports in one go. In particular, we consider a symmetric cloning scenario, i.e., all copies should look the same locally.

In PBTC, we use a POVM whose outcomes specify a subset of all ports available: The ports contained in such a subset will receive a copy of the input state, whereas the remaining ports will be discarded. The set of measurement outcomes is defined as follows.
\begin{dfn}
For fixed $N$ and $M\ (N\geq M)$, we define the set
\begin{align}
    \mathcal{I}_N^M\coloneq\{\{i_1,\ldots,i_M\}\ |\ &i_k\in\{1,\ldots,N\}\ \mathrm{for}\ k\in\{1,\ldots,M\}, \notag
    \\&\mathrm{and}\  i_1<\cdots<i_M\}.
\end{align}
\end{dfn}
\noindent Here, $|\mathcal{I}_N^M| = \binom{N}{M}$. For $I=\{i_1,\ldots,i_M\} \in \mathcal{I}_N^M$, we write the composite system $A_I \equiv A_{i_1} \cdots A_{i_M}$ and $A_I^c \equiv A^N \setminus A_I$.

The right side of Fig.~\ref{fig:PBT and PBTC} represents PBTC. In PBTC, Alice and $N$ receivers share a resource state $\Phi_{A^N B^N}$. We consider Alice to have ports $A_1,\ldots,A_N$, and the $i$th receiver has port $B_i\ (i=1,\ldots,N)$. Additionally, Alice holds the input state $\sigma_X$. The protocol of PBTC is as follows:
\begin{itemize}
    \item[(1)]Alice measures the input system $X$ and all her ports $A^N$ with a POVM $\{E^I_{X A^N}\}_{I \in \mathcal{I}_N^M}$.
    \item[(2)]Alice relays the outcome $I \in \mathcal{I}_N^M$ to all receivers via classical communication.
    \item[(3)]The $i$th receiver discards their port $B_i$ if $i \notin I$ and does nothing if $i \in I$ $(i=1,\ldots,N)$.
\end{itemize}
This completes the protocol, and the clones are transferred to the $M$ receivers. The PBTC channel $\mathcal{D}_{N,M}: \mathcal{L}(\mathcal{H}_X) \rightarrow \mathcal{L}(\mathcal{H}^{\otimes M})$ is expressed as follows:
\begin{align}
    \mathcal{D}_{N,M}(\sigma_X)=\sum_{I\in\mathcal{I}_N^M}\Tr_{XA^{N}B_I^c}\left[E^I_{XA^{N}}\left(\sigma_X\otimes\Phi_{A^NB^N}\right)\right],
\end{align} 
and the remaining system $B_{i_k}$ $(k=1, \ldots, M)$ is relabeled as output system $B_k$.

\subsubsection*{Clone-and-teleport protocol}
In Sec.~\ref{subsec:Telecloning}, we described a trivial ``clone-and-teleport'' protocol for telecloning, and we can consider a similar protocol for PBTC. The protocol is that Alice creates an optimal $M$ clone
locally and transfers it by \textit{multi-port-based teleportation} (MPBT)~\cite{studzinski2022efficient}. For simplicity, we consider only $1\to M$ cloning.

MPBT is the protocol that transfers $M$ qudit states to $M$ ports in one go. The POVM for MPBT using $N$ pairs of maximally entangled states is given by the PGM for $\{(|\mathcal{J}_N^M|^{-1},\rho^J_{X^MA^N})\}_{J\in\mathcal{J}_N^M}$, where
\begin{align}
    \mathcal{J}_N^M\coloneq\{(j_1,\ldots,j_M)\ |\ &j_k\in\{1,\ldots,N\}\ \mathrm{for}\ k\in\{1,\ldots,M\}, \notag
    \\&\mathrm{and}\  j_k\neq j_l\ \mathrm{for}\ k\neq l\}
\end{align}
is the ordered version of $\mathcal{I}_N^M$, and for $J=(j_1,\ldots,j_M)\in\mathcal{J}_N^M$,
\begin{align}
\rho^J_{X^MA^N}\coloneq\bigotimes_{k=1}^M\Phi^+_{X_kA_{j_k}}\otimes\frac{1}{d^{N-M}}\mathbbm{1}_{A_J^c}.
\end{align}

We refer to the protocol that performs optimal cloning and MPBT successively as the \textit{clone-and-MPBT protocol}. The clone-and-MPBT protocol can be considered in the framework of PBTC. The clone-and-MPBT protocol is equivalent to PBTC using the POVM
\begin{align}\label{eq:trivial POVM}
\big\{(\mathcal{C}^\dagger_{X^M\rightarrow X}\otimes\mathrm{id}_{A^N})(E^J_{X^MA^N})\big\}_{J\in\mathcal{J}_N^M},
\end{align}
where $\mathcal{C}^\dagger_{X^M\rightarrow X}:\mathcal{L}(\mathcal{H}^{\otimes M}_{X^M})\rightarrow\mathcal{L}(\mathcal{H}_X)$ is the adjoint of the optimal cloning map given by \eqref{eq:optimal cloning map} and $\{E^J_{X^MA^N}\}_{J\in\mathcal{J}_N^M}$ is the PGM for $\{(|\mathcal{J}_N^M|^{-1},\rho^J_{X^MA^N})\}_{J\in\mathcal{J}_N^M}$. The set of measurement outcomes in \eqref{eq:trivial POVM} is not $\mathcal{I}_N^M$, but that poses no issue because it can be made equivalent to $\mathcal{I}_N^M$ by summing the POVM elements for outcomes that are identical when reordered. Note that since the figure of merit for \eqref{eq:trivial POVM} differs from the original PGM, it is essential to make \eqref{eq:trivial POVM} a proper POVM in the full Hilbert space. Thus, we take into account $\Delta$, given by \eqref{eq:delta}, in the clone-and-MPBT protocol.

The clone-and-MPBT protocol can transfer optimal clones in the limit of the number of ports $N\rightarrow\infty$. However, optimality for finite $N$ is not guaranteed. In fact, the POVM we introduce in the next section achieves higher fidelity than the clone-and-MPBT protocol when $N$ is small.

\subsection{Generalization of POVM}
As we have noted, in this work, we consider only symmetric cloning. We first introduce an ensemble  for a PGM by partially symmetrizing the state $\rho^i_{XA^N}$ that constitutes the optimal POVM of PBT.
\begin{dfn}
For $I=\{i_1,\ldots,i_M\}\in\mathcal{I}_N^M$, let
\begin{align}\label{eq:signal state of PBTC}
  \eta^{I}_{XA^N}\coloneq\frac{d^M}{d[M]}\Pi_{A_I}\rho^{i_1}_{XA^N}\Pi_{A_I},
\end{align}
where $\rho^{i_1}_{XA^N}$ is the state given by \eqref{eq:signal state of PBT} and $\Pi_{A_I}$ is the projection onto the symmetric subspace of $\mathcal{H}^{\otimes M}_{A_I}$. 
\end{dfn}
\noindent In Eq. \eqref{eq:signal state of PBTC}, although we formally use $i_1$ as the index of $\rho^{i_1}_{XA^N}$, note that $\eta^I_{XA^N}$ remains in the same state regardless of whether $i_1$ is replaced by any element of $I=\{i_1,\ldots,i_M\}$. 

We refer to PBTC that uses $N$ pairs of maximally entangled states and the PGM for $\{(|\mathcal{I}_N^M|^{-1},\eta^I_{XA^N})\}_{I\in\mathcal{I}_N^M}$ as \textit{standard} PBTC. Since standard PBTC is symmetric cloning, we evaluate its performance using an average fidelity of a single clone for all input pure states.

The asymptotic fidelity of standard PBTC is given by the following theorem.
\begin{thm}\label{thm:main result}
  Let us consider the standard PBTC channel~$\mathcal{D}^\mathrm{std}_{N,M}$ and the channel $\mathcal{R}$ that represents the trace over all subsystems except the first one. In the limit of the number of ports $N\rightarrow\infty$, the following equality holds:
    \begin{align}\label{eq:fidelity of standard PBTC}
        \lim_{N\to \infty}f(\mathcal{R}\circ\mathcal{D}^\mathrm{std}_{N,M})=\frac{d+2M-1}{M(d+1)},
    \end{align}
    where $d$ is the dimension of the local Hilbert space and $M$ is the number of clones to be transferred.
\end{thm}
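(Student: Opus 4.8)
The plan is to split the claimed equality into an upper bound valid for every $N$ and a matching lower bound in the limit, after first collapsing the figure of merit to a single scalar via symmetry. The starting point is the structural identity
\[
  \eta^I_{XA^N}=\frac{1}{d^{N-M}}\,(\mathrm{id}_X\otimes\mathcal C_{A_{i_1}\to A_I})\!\left(\Phi^+_{XA_{i_1}}\right)\otimes\mathbbm 1_{A_I^c},
\]
obtained by pulling $\Pi_{A_I}$ through the identity factors on $A_I^c$ and comparing with \eqref{eq:optimal cloning map}. This exhibits each signal state as the optimal $1\to M$ clone of a maximally entangled pair placed on the kept ports, tensored with the maximally mixed state on the discarded ports. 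It also shows that the ensemble, the resource, and the reduction $\mathcal R$ are invariant under $\bar U_X\otimes U^{\otimes N}_{A^N}$ and under permutations of the ports, so the single-clone channel $\mathcal R\circ\mathcal D^{\mathrm{std}}_{N,M}$ is irreducibly $U(d)$-covariant and hence depolarizing. Its average fidelity is then fixed by one parameter, and by \eqref{eq:f and F} it suffices to track a single scalar, e.g. the entanglement fidelity of $\mathcal R\circ\mathcal D^{\mathrm{std}}_{N,M}$, or equivalently $\bra{\phi}\mathcal R\circ\mathcal D^{\mathrm{std}}_{N,M}(\phi)\ket{\phi}$ for one fixed $\phi$.

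The upper bound is then immediate. Because the symmetric projectors $\Pi_{A_I}$ together with the maximally entangled ports force the $M$-partite output of $\mathcal D^{\mathrm{std}}_{N,M}$ to be supported on the symmetric subspace, this channel is a legitimate symmetric $1\to M$ cloner, and the optimality of Werner's cloner behind \eqref{eq:optimal fidelity} yields $f(\mathcal R\circ\mathcal D^{\mathrm{std}}_{N,M})\le\frac{d+2M-1}{M(d+1)}$ for all $N$. The real content of the theorem is therefore the reverse inequality $\liminf_{N\to\infty}f(\mathcal R\circ\mathcal D^{\mathrm{std}}_{N,M})\ge\frac{d+2M-1}{M(d+1)}$.

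For this lower bound I would compute the scalar directly. Using the transpose trick for the maximally entangled ports, I can transfer the $B$-side operators back to $A^N$ and rewrite the single-clone (entanglement) fidelity as a trace over $XA^N$ built from the PGM elements $E^I=\binom{N}{M}^{-1}\bar\eta^{-1/2}\eta^I\bar\eta^{-1/2}$ and $\Phi^+$. The crux is to control $\bar\eta=\binom{N}{M}^{-1}\sum_I\eta^I$ and its inverse square root: since $\bar\eta$ commutes with $\bar U_X\otimes U^{\otimes N}_{A^N}$ and with $S_N$, I would block-diagonalize it via Schur--Weyl duality over Young diagrams with at most $d$ rows, carefully tracking how the extra symmetrizer over the $M$ kept ports couples neighbouring blocks. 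This is the analog of the spectral analysis of $\bar\rho$ in standard PBT, with the single teleported port replaced by a symmetrized cluster of $M$ ports.

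I expect the asymptotic step to be the main obstacle. One must show that as $N\to\infty$ the induced measure on Young diagrams concentrates sharply enough that the PGM inversion becomes asymptotically trivial on the relevant support, leaving exactly the optimal cloning channel $\mathcal C$. Intuitively, in this regime the measurement locates the teleported port with vanishing error, as ideal PBT does, and the symmetrization on $A_I$ reproduces Werner's clone, so the surviving contribution should reproduce $\gamma_{1,M}=\frac1M\frac{M+d}{1+d}$ and hence, through $f=\gamma_{1,M}+(1-\gamma_{1,M})/d$, the value $\frac{d+2M-1}{M(d+1)}$. Making this concentration and the matching of leading terms rigorous, rather than the earlier structural reductions, is where essentially all of the estimates lie; together with the upper bound it pins the limit to the stated value.
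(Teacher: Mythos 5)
Your overall skeleton is the same as the paper's: reduce to a single scalar via the $f$--$F$ relation \eqref{eq:f and F}, get the upper bound $f(\mathcal R\circ\mathcal D^{\mathrm{std}}_{N,M})\le\frac{d+2M-1}{M(d+1)}$ for free from the optimality of symmetric universal cloning, and put all the work into the matching lower bound. Your structural identity $\eta^I_{XA^N}=d^{-(N-M)}(\mathrm{id}_X\otimes\mathcal C_{A_{i_1}\to A_I})(\Phi^+_{XA_{i_1}})\otimes\mathbbm 1_{A_I^c}$ checks out (the prefactors $\frac{d^M}{d[M]}\cdot\frac{1}{d^{N-1}}\cdot\frac{d[M]}{d}$ do collapse to $d^{-(N-M)}$), and the covariance/permutation-invariance observations are correct. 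The problem is that the lower bound --- which you yourself identify as ``where essentially all of the estimates lie'' --- is only a plan, not a proof. You propose to block-diagonalize $\bar\eta$ via Schur--Weyl duality and then prove a concentration statement for the induced measure on Young diagrams strong enough to make the PGM inversion asymptotically trivial. Neither the block decomposition (which is genuinely harder here than in standard PBT, because the symmetrizer over the $M$ kept ports couples representations in a nontrivial way) nor the concentration estimate is carried out, and no quantitative bound on $f(\mathcal R\circ\mathcal D^{\mathrm{std}}_{N,M})$ is ever produced. As written, the argument establishes only the upper bound, i.e.\ the easy half of the theorem.

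For contrast, the paper avoids spectral analysis entirely: it uses Proposition~\ref{prop:P invariance of E} to convert the fidelity into the PGM success probability $p_{\mathrm{succ}}$ for the ensemble $\{\eta^I\}$, then applies the elementary lower bound $p_{\mathrm{succ}}\ge\big(|\mathcal I_N^M|\,\bar r\,\Tr\bar\eta^2\big)^{-1}$ of Lemma~\ref{lem:general lower bound of PGM}. This reduces everything to two computable scalars: the rank $\rank\eta^I=d[M-1]d^{N-M}$ (Proposition~\ref{prop:bar r}) and the purity $\Tr[\bar\eta^2]$, which is controlled by the observation that for large $N$ almost all pairs $(I,J)$ are disjoint and for disjoint pairs $\Tr[\eta^I\eta^J]=d^{-(N+1)}$ exactly (Lemma~\ref{lem:value of Tr when I cap J = 0}), while the diagonal and overlapping terms are negligible (Lemmas~\ref{lem:Tr takes maximum value when I=J} and~\ref{lem:upper bound of Tr when I=J}). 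If you want to salvage your route, you either need to execute the Schur--Weyl program in full, or replace it with a purity-plus-rank argument of this kind; without one of these, the limit \eqref{eq:fidelity of standard PBTC} is not established.
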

\noindent The proof is given in Sec.~\ref{subsec:proof}. The value of \eqref{eq:fidelity of standard PBTC} coincides with the fidelity of $1\rightarrow M$ optimal cloning given by \eqref{eq:optimal fidelity}. Therefore, standard PBTC can transfer optimal clones asymptotically.

Finally, we numerically compare the performance of standard PBTC with the clone-and-MPBT protocol described in the previous section.
Figure~\ref{fig:Fidelity} shows the fidelity of each protocol obtained by numerical calculation. Fidelity is calculated for a single clone. Namely, it represents $f(\mathcal{R}\circ\mathcal{D}^\mathrm{std}_{N,M})$ and $f(\mathcal{R}\circ\mathcal{T}_{N,M})$, where $\mathcal{D}^\mathrm{std}_{N,M}$ is the standard PBTC channel and $\mathcal{T}_{N,M}$ is the quantum channel corresponding to the clone-and-MPBT protocol. 
Figure~\ref{fig:Fidelity} shows that standard PBTC achieves higher fidelity than the clone-and-MPBT protocol when $\nobreak{d=2},\ \nobreak{M=2}$, and $\nobreak{2\leq N\leq6}$. Due to the increasing complexity of the calculation we were not able to go to higher values of $M$ and $N$, but we conjecture that a finite gap exists for all finite values. This problem could potentially be gotten rid of by exploiting the proper representation theory approach given in \cite{studzinski2022efficient,grinko2023gelfand}.

\begin{figure}[t]
\includegraphics[width=\columnwidth, trim=0mm 90mm 0mm 90mm, clip]{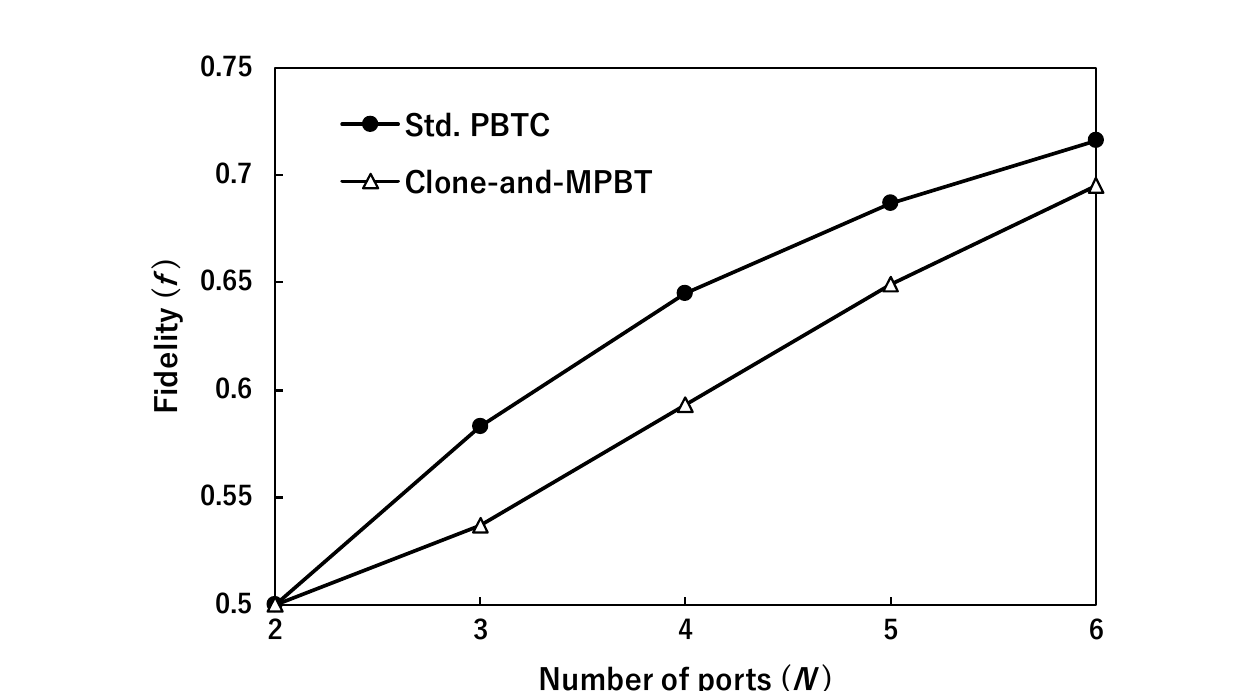}
\caption{The fidelity of the PBTC protocol proposed here (circles) and the trivial clone-and-MPBT protocol (triangles). The plotted values were obtained by numerical calculation for $d=2$ and $M=2$. Fidelity is the average over the input pure state and is calculated for a single clone.}
\label{fig:Fidelity}
\end{figure}

\subsection{Proof of Theorem~\ref{thm:main result}}\label{subsec:proof}
In this section, we prove Theorem \ref{thm:main result}. Within this section, we use the same notation for operators in systems $A^NB$ as we did for operators in systems $XA^N$ in the previous sections via the isomorphism $X\cong B$. 
For example, for $\rho^i_{XA^N}=\Phi^+_{XA_i}\otimes\frac{1}{d^{N-1}}\mathbbm{1}_{A_i^c}$ defined in \eqref{eq:signal state of PBT}, we have $\rho^i_{A^NB}=\Phi^+_{A_iB}\otimes\frac{1}{d^{N-1}}\mathbbm{1}_{A_i^c}$.

We start by showing the properties related to the symmetric group.

\begin{dfn}\label{dfn:permutation}
Let $S_N$ be the symmetric group in $\{1,\ldots,N\}$, and for $I\in\mathcal{I}_N^M$, let $S_I$ be the subgroup of $S_N$ consisting of all permutations of $I=\{i_1,\ldots,i_M\}$. For $\sigma\in S_N$, the action of the unitary representation $V_\sigma\in\mathcal{L}(\mathcal{H}^{\otimes N})$ is defined follows:
 \begin{align}
     V_\sigma\ket{k_1\cdots k_N}\coloneq\ket{k_{\sigma^{-1}(1)}\cdots k_{\sigma^{-1}(N)}}.
 \end{align}
In addition, for $I\in\mathcal{I}_N^M$, let $\Pi_{A_I}$ be the projection onto the symmetric subspace of $\mathcal{H}^{\otimes M}_{A_I}$ as follows: 
\begin{align}\label{eq:projection onto the symmetric subspace}
    \Pi_{A_I}\coloneq\frac{1}{M!}\sum_{\sigma\in S_I}V_\sigma.
\end{align}
\end{dfn} 

\noindent Note that although $\Pi_{A_I}$ defined in \eqref{eq:projection onto the symmetric subspace} is an operator on $\mathcal{H}^{\otimes N}_{A^N}$, it acts nontrivially only on $\mathcal{H}^{\otimes M}_{A_I}$.

\begin{lem}\label{lem:sigma and S}
   For any $\sigma\in S_N$ and $I\in\mathcal{I}_N^M$, $\sigma S_I\sigma^{-1}=~S_{\sigma(I)}$.  
\end{lem}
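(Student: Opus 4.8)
The plan is to prove the group-conjugation identity $\sigma S_I \sigma^{-1} = S_{\sigma(I)}$ by showing set inclusion in both directions, using the fact that $S_I$ is characterized as the stabilizer-type subgroup of permutations that fix the complement $I^c$ pointwise while permuting $I$ among itself. First I would unpack the definitions: an element $\tau \in S_I$ is a permutation of $\{1,\ldots,N\}$ that maps $I$ to $I$ (and acts as the identity outside $I$, or at least restricts to a permutation of $I$), so $S_I$ consists exactly of those $\tau \in S_N$ with $\tau(I) = I$ and $\tau|_{I^c} = \mathrm{id}$. Similarly $S_{\sigma(I)}$ consists of those $\pi \in S_N$ with $\pi(\sigma(I)) = \sigma(I)$ and $\pi|_{\sigma(I)^c} = \mathrm{id}$.

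For the forward inclusion I would take an arbitrary element $\sigma \tau \sigma^{-1}$ with $\tau \in S_I$ and verify it lies in $S_{\sigma(I)}$. The key computation is that
\begin{align}
(\sigma \tau \sigma^{-1})(\sigma(I)) = \sigma(\tau(I)) = \sigma(I),
\end{align}
using $\tau(I) = I$, which shows $\sigma\tau\sigma^{-1}$ maps $\sigma(I)$ to itself. For the action outside $\sigma(I)$, note that $\sigma(I)^c = \sigma(I^c)$ since $\sigma$ is a bijection; then for any $j \in \sigma(I^c)$, writing $j = \sigma(k)$ with $k \in I^c$, we have $(\sigma\tau\sigma^{-1})(j) = \sigma(\tau(k)) = \sigma(k) = j$ because $\tau$ fixes $I^c$ pointwise. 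Hence $\sigma\tau\sigma^{-1} \in S_{\sigma(I)}$, giving $\sigma S_I \sigma^{-1} \subseteq S_{\sigma(I)}$.

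The reverse inclusion follows by applying the forward inclusion to the permutation $\sigma^{-1}$ and the set $\sigma(I)$: from $\sigma^{-1} S_{\sigma(I)} \sigma \subseteq S_{\sigma^{-1}(\sigma(I))} = S_I$, conjugating both sides by $\sigma$ yields $S_{\sigma(I)} \subseteq \sigma S_I \sigma^{-1}$. Combining the two inclusions establishes the equality. I do not anticipate a genuine obstacle here, as this is the standard fact that conjugation permutes stabilizer subgroups; the only point requiring mild care is the clean bookkeeping of the identity $\sigma(I)^c = \sigma(I^c)$ and confirming that the characterization of $S_I$ as ``permutes $I$, fixes $I^c$ pointwise'' matches the definition given via the subgroup of permutations of $I=\{i_1,\ldots,i_M\}$. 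Once that identification is in place the argument is a one-line bijection chase in each direction.
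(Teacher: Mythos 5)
Your proof is correct and follows essentially the same route as the paper's: both establish the two set inclusions for the conjugated subgroup, with the paper verifying the reverse inclusion directly while you obtain it by applying the forward inclusion to $\sigma^{-1}$ and $\sigma(I)$ — a trivial variation. Your explicit check that $\sigma\tau\sigma^{-1}$ fixes $\sigma(I)^{c}=\sigma(I^{c})$ pointwise is slightly more careful than the paper's one-line ``bijection on $\sigma(I)$'' remark, but the argument is the same.
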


\begin{proof}
First, if $\tau\in\sigma S_I\sigma^{-1}$, a $\pi\in S_I$ such that $\tau=\sigma\pi\sigma^{-1}$ exists. Since $\sigma\pi\sigma^{-1}$ is a bijection on $\sigma(I)$, $\tau$ is a permutation of $\sigma(I)$. Thus, $\sigma S_I\sigma^{-1}\subset S_{\sigma(I)}$. Next, suppose $\tau\in S_{\sigma(I)}$. In this case, $\sigma^{-1}\tau\sigma$ is a bijection on $I$. Therefore, a $\pi\in S_I$ such that $\sigma^{-1}\tau\sigma=\pi$ exists. Since $\tau=\sigma\pi\sigma^{-1}$ and $\sigma\pi\sigma^{-1}\in\sigma S_I\sigma^{-1}$, we have $\tau\in\sigma S_I\sigma^{-1}$. Therefore, $S_{\sigma(I)}\subset\sigma S_I\sigma^{-1}$.
\end{proof}

\begin{cor}\label{cor:sigma and P}
For any $\sigma\in S_N$ and $I\in\mathcal{I}_N^M$, $V_\sigma \Pi_{A_I}V_\sigma^\dagger=\Pi_{A_{\sigma(I)}}$.
\end{cor}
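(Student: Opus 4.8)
The plan is to reduce the corollary directly to Lemma~\ref{lem:sigma and S} by exploiting that $\sigma\mapsto V_\sigma$ is a group homomorphism. First I would record two elementary facts about the representation introduced in Definition~\ref{dfn:permutation}: that $V_\sigma V_\pi = V_{\sigma\pi}$ for all $\sigma,\pi\in S_N$, and that $V_\sigma^\dagger = V_{\sigma^{-1}}$. The first follows by applying both sides to a basis vector $\ket{k_1\cdots k_N}$ and checking that the index relabelings compose correctly; the second holds because $V_\sigma$ merely permutes an orthonormal basis, so it is unitary with inverse $V_{\sigma^{-1}}$.

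Next I would conjugate the defining sum \eqref{eq:projection onto the symmetric subspace} term by term:
\begin{align}
V_\sigma \Pi_{A_I} V_\sigma^\dagger = \frac{1}{M!}\sum_{\pi\in S_I} V_\sigma V_\pi V_\sigma^\dagger = \frac{1}{M!}\sum_{\pi\in S_I} V_{\sigma\pi\sigma^{-1}}.
\end{align}
The map $\pi\mapsto\sigma\pi\sigma^{-1}$ is a bijection from $S_I$ onto $\sigma S_I\sigma^{-1}$, and by Lemma~\ref{lem:sigma and S} the latter set equals $S_{\sigma(I)}$. Re-indexing the sum by $\tau=\sigma\pi\sigma^{-1}$ therefore converts the right-hand side into $\frac{1}{M!}\sum_{\tau\in S_{\sigma(I)}} V_\tau$, which is precisely $\Pi_{A_{\sigma(I)}}$ by definition \eqref{eq:projection onto the symmetric subspace}.

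There is essentially no genuine obstacle here, since the corollary is a formal consequence of the lemma once the homomorphism property is in hand. The only point requiring a moment's care is confirming that the convention $V_\sigma\ket{k_1\cdots k_N}=\ket{k_{\sigma^{-1}(1)}\cdots k_{\sigma^{-1}(N)}}$ genuinely yields $V_\sigma V_\pi = V_{\sigma\pi}$ rather than the anti-homomorphism $V_{\pi\sigma}$; the inverse appearing in the index is exactly what makes this a left action. Because the argument stays entirely at the level of the group algebra, nothing needs to be said about how $\Pi_{A_I}$ acts on the complementary systems $A_I^c$, consistent with the remark following Definition~\ref{dfn:permutation}.
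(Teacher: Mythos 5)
Your proposal is correct and follows essentially the same route as the paper: conjugate the defining sum for $\Pi_{A_I}$ term by term using $V_\sigma V_\pi V_\sigma^\dagger = V_{\sigma\pi\sigma^{-1}}$ and then re-index via Lemma~\ref{lem:sigma and S}. The extra care you take in verifying the homomorphism property and the left-action convention is a welcome elaboration but does not change the argument.
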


\begin{proof}
  From Lemma \ref{lem:sigma and S}, we have
  \begin{align}
    \sum_{\tau\in S_I}V_{\sigma\tau\sigma^{-1}}=\sum_{\pi\in S_{\sigma(I)}}V_{\pi}.
  \end{align}
  Since $V_{\sigma\pi}=V_\sigma V_\pi$ and $V_{\sigma^{-1}}=V_\sigma^\dagger$ for any $\sigma,\pi\in S_N$, the lemma holds.
\end{proof}

\begin{prop}\label{prop:P invariance of E}
  Let $\left\{E^{I}_{A^NB}\right\}_{I\in\mathcal{I}_N^M}$ be the PGM for $\{(|\mathcal{I}_N^M|^{-1},\eta^I_{A^NB})\}_{I\in\mathcal{I}_N^M}$. For any $I=\{i_1,\ldots,i_M\}\in\mathcal{I}_N^M$, it holds that $\Pi_{A_I}E^I_{A^NB}\Pi_{A_I}=E^I_{A^NB}$.
\end{prop}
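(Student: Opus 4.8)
The plan is to exploit the invariance of the averaged signal state $\bar\eta_{A^NB}\coloneq|\mathcal{I}_N^M|^{-1}\sum_{J\in\mathcal{I}_N^M}\eta^J_{A^NB}$ under the full symmetric group $S_N$ acting on $A^N$. The central claim I would establish is that $\bar\eta_{A^NB}$ commutes with every $V_\sigma$, $\sigma\in S_N$, and hence with each projector $\Pi_{A_I}$. Granting this, the proposition follows quickly: from the definition $\eta^I_{A^NB}=\frac{d^M}{d[M]}\Pi_{A_I}\rho^{i_1}_{A^NB}\Pi_{A_I}$ together with $\Pi_{A_I}^2=\Pi_{A_I}$ one has $\Pi_{A_I}\eta^I_{A^NB}\Pi_{A_I}=\eta^I_{A^NB}$, and since $\bar\eta^{-1/2}$ is a function of $\bar\eta$ it inherits the commutation $[\bar\eta^{-1/2},\Pi_{A_I}]=0$. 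Substituting these two facts into $E^I_{A^NB}=|\mathcal{I}_N^M|^{-1}\bar\eta^{-1/2}\eta^I_{A^NB}\bar\eta^{-1/2}$ gives $\Pi_{A_I}E^I_{A^NB}\Pi_{A_I}=|\mathcal{I}_N^M|^{-1}\bar\eta^{-1/2}\Pi_{A_I}\eta^I_{A^NB}\Pi_{A_I}\bar\eta^{-1/2}=E^I_{A^NB}$.

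To prove the central claim I would compute $V_\sigma\bar\eta_{A^NB}V_\sigma^\dagger$ termwise. Conjugating a single summand $\eta^J_{A^NB}=\frac{d^M}{d[M]}\Pi_{A_J}\rho^{j_1}_{A^NB}\Pi_{A_J}$, I would apply Corollary~\ref{cor:sigma and P} to replace $V_\sigma\Pi_{A_J}V_\sigma^\dagger$ by $\Pi_{A_{\sigma(J)}}$, and use that $V_\sigma$ (acting as $V_\sigma\otimes\mathbbm{1}_B$) sends $\rho^{j_1}_{A^NB}=\Phi^+_{A_{j_1}B}\otimes\frac{1}{d^{N-1}}\mathbbm{1}_{A_{j_1}^c}$ to $\rho^{\sigma(j_1)}_{A^NB}$. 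Because $\sigma(j_1)\in\sigma(J)$, the conjugated summand equals $\eta^{\sigma(J)}_{A^NB}$ by the index-independence remarked upon after the definition of $\eta^I_{A^NB}$. Since $J\mapsto\sigma(J)$ is a bijection of $\mathcal{I}_N^M$, re-indexing the sum recovers $\bar\eta_{A^NB}$, which proves $V_\sigma\bar\eta_{A^NB}V_\sigma^\dagger=\bar\eta_{A^NB}$.

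Finally, since $\Pi_{A_I}=\frac{1}{M!}\sum_{\tau\in S_I}V_\tau$ is a linear combination of the $V_\tau$ with $\tau\in S_I\subset S_N$, commutativity of $\bar\eta$ with every $V_\sigma$ immediately yields $[\bar\eta_{A^NB},\Pi_{A_I}]=0$, and the functional calculus promotes this to $[\bar\eta^{-1/2},\Pi_{A_I}]=0$ (here one notes that each $V_\sigma$ preserves the support and the eigenspaces of $\bar\eta$, so $\bar\eta^{-1/2}$, defined on that support, commutes with $\Pi_{A_I}$). I expect the $S_N$-invariance step to be the main obstacle: it rests essentially on recognizing $\sigma(j_1)$ as a legitimate representative of the set $\sigma(J)$ and invoking the index-independence of $\eta^{\sigma(J)}_{A^NB}$, so that bookkeeping must be handled carefully, whereas the projector identity and the functional-calculus arguments are routine.
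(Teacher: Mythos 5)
Your proposal is correct and follows essentially the same route as the paper: establish $S_N$-invariance of $\bar\eta_{A^NB}$ via Corollary~\ref{cor:sigma and P} and $V_\sigma\rho^{i_1}_{A^NB}V_\sigma^\dagger=\rho^{\sigma(i_1)}_{A^NB}$, deduce $[\Pi_{A_I},\bar\eta_{A^NB}]=0$ by averaging over $S_I$, promote this to $\bar\eta^{-1/2}_{A^NB}$ on its support, and conclude by sandwiching $E^I_{A^NB}$ with $\Pi_{A_I}$ using $\Pi_{A_I}^2=\Pi_{A_I}$. Your explicit attention to the point that $\sigma(j_1)$ need not be the distinguished (smallest) element of $\sigma(J)$, so that the index-independence of $\eta^{\sigma(J)}_{A^NB}$ must be invoked, is exactly the symmetry the paper appeals to in its remark within the proof.
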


\begin{proof}
  Let us denote
  \begin{align}
    \bar{\eta}_{A^NB}=\binom{N}{M}^{-1}\sum_{I\in\mathcal{I}_N^M}\eta^I_{A^NB}.   
  \end{align}
 For any $\sigma\in S_N$, we have
  \begin{align}
    &V_\sigma\bar{\eta}_{A^NB}\nonumber
    \\=&{\binom{N}{M}}^{-1}\frac{d^M}{d[M]}\sum_{I\in\mathcal{I}_N^M}V_{\sigma}\Pi_{A_I}\rho^{i_1}_{A^NB}\Pi_{A_I}\nonumber
\\=&{\binom{N}{M}}^{-1}\frac{d^M}{d[M]}\sum_{I\in\mathcal{I}_N^M}V_{\sigma}\Pi_{A_I}V_{\sigma}^\dagger V_\sigma\rho^{i_1}_{A^NB}V_{\sigma}^\dagger V_\sigma \Pi_{A_I}V_{\sigma}^\dagger V_\sigma\nonumber
   \\=&{\binom{N}{M}}^{-1}\frac{d^M}{d[M]}\sum_{I\in\mathcal{I}_N^M}\Pi_{A_{\sigma(I)}}\rho^{\sigma(i_1)}_{A^NB}\Pi_{A_{\sigma(I)}}V_\sigma\nonumber
   \\=&\bar{\eta}_{A^NB}V_\sigma
  \end{align}
  The second equality uses $V_\sigma^{-1}=V_\sigma^\dagger$, and the third equality uses $V_\sigma\rho^{i_1}_{A^NB}V_{\sigma}^\dagger=\rho^{\sigma(i_1)}_{A^NB}$ and Corollary \ref{cor:sigma and P}. Note that from the symmetry ${\binom{N}{M}}^{-1}\frac{d^M}{d[M]}\sum_{I\in\mathcal{I}_N^M}\Pi_{A_{\sigma(I)}}\rho^{\sigma(i_1)}_{A^NB}\Pi_{A_{\sigma(I)}}={\bar \eta}_{A^NB}$ holds. By summing both sides with respect to $\sigma\in S_I$ and dividing by $M!$, we obtain
  \begin{align}
    \Pi_{A_I}\bar{\eta}_{A^NB}=\bar{\eta}_{A^NB}\Pi_{A_I}.
  \end{align}
  Since $\bar{\eta}^{-1}_{A^NB}$ is defined on the support of $\bar{\eta}_{A^NB}$, $\big[\Pi_{A_I},\bar{\eta}^{-\frac{1}{2}}_{A^NB}\big]=0$ also holds. Therefore,
  \begin{align}
    &\Pi_{A_I}E^I_{A^NB}\Pi_{A_I}\nonumber
    \\=&\Pi_{A_I}\bar{\eta}_{A^NB}^{-\frac{1}{2}}\left(\frac{d^M}{d[M]}\Pi_{A_I}\rho^{i_1}_{A^NB}\Pi_{A_I}\right)\bar{\eta}_{A^NB}^{-\frac{1}{2}}\Pi_{A_I}\nonumber
    \\=&\bar{\eta}_{A^NB}^{-\frac{1}{2}}\left(\frac{d^M}{d[M]}\Pi_{A_I}\rho^{i_1}_{A^NB}\Pi_{A_I}\right)\bar{\eta}_{A^NB}^{-\frac{1}{2}}\nonumber
    \\=&E^I_{A^NB}.
  \end{align}
\end{proof}

We then calculate the entanglement fidelity. The following lemma connects the entanglement fidelity of
PBT with the state-discrimination problem.
\begin{lem}[\cite{ishizaka2009quantum,beigi2011simplified}]
\label{lem:entanglement fidelity of standard dPBT}
Let us fix the resource state to be $N$ pairs of maximally entangled states. The entanglement fidelity of the PBT channel $\mathcal{E}_N$ using the POVM $\{E^i_{XA^N}\}_{i=1}^N$ is given by
\begin{align}\label{eq:entanglement fidelity of standard dPBT}
  F(\mathcal{E}_N)=\frac{1}{d^2}\sum_{i=1}^N\Tr\left[E^i_{A^NB}\rho^i_{A^NB}\right].
\end{align}
\end{lem}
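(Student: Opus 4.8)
The plan is to unfold the definition of the entanglement fidelity and then exploit the ricochet (transpose) property of the maximally entangled ports to collapse the $B^N$-marginal of the resource state into the signal states $\rho^i$. First I would introduce a reference system $R\cong X$ and write
\begin{align}
F(\mathcal{E}_N)=\bra{\Phi^+_{RB}}(\mathrm{id}_R\otimes\mathcal{E}_N)(\Phi^+_{RX})\ket{\Phi^+_{RB}}.\notag
\end{align}
Expanding both maximally entangled states in the computational basis reduces this to the elementwise form
\begin{align}
F(\mathcal{E}_N)=\frac{1}{d^2}\sum_{m,n}\bra{m}_B\,\mathcal{E}_N(\ket{m}\bra{n}_X)\,\ket{n}_B,\notag
\end{align}
which avoids carrying the reference around in the rest of the computation.

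Next I would substitute the PBT channel and rewrite the sandwich $\bra{m}_B(\cdot)\ket{n}_B$ as a trace against $\ket{n}\bra{m}_{B_i}$, so that every system $XA^NB^N$ is traced out. Summing over $m,n$ then assembles the input operator $\ket{m}\bra{n}_X$ and the output operator $\ket{n}\bra{m}_{B_i}$ into the swap $\mathbbm{F}_{XB_i}=\sum_{m,n}\ket{m}\bra{n}_X\otimes\ket{n}\bra{m}_{B_i}$ acting between the input and the $i$th port, yielding
\begin{align}
F(\mathcal{E}_N)=\frac{1}{d^2}\sum_{i=1}^N\Tr\big[\mathbbm{F}_{XB_i}\,E^i_{XA^N}\,\Phi_{A^NB^N}\big].\notag
\end{align}

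The crux is the partial trace over $B^N$. Since $E^i_{XA^N}$ is idle on $B^N$, I would trace out all of Bob's ports first. Fixing the resource to $N$ maximally entangled pairs, $\Phi_{A^NB^N}=\bigotimes_j\Phi^+_{A_jB_j}$, the single ricochet identity $\Tr_{B_i}[\mathbbm{F}_{XB_i}\Phi^+_{A_iB_i}]=\Phi^+_{XA_i}$ handles the active port, while each idle port contributes $\Tr_{B_j}[\Phi^+_{A_jB_j}]=\tfrac{1}{d}\mathbbm{1}_{A_j}$. Their tensor product is precisely $\Phi^+_{XA_i}\otimes\frac{1}{d^{N-1}}\mathbbm{1}_{A_i^c}=\rho^i_{XA^N}$, so that $F(\mathcal{E}_N)=\frac{1}{d^2}\sum_i\Tr[E^i_{XA^N}\rho^i_{XA^N}]$; the stated form follows via the isomorphism $X\cong B$.

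I expect the only real obstacle to be careful bookkeeping: correctly identifying the swap operator from the elementwise sum (which requires using cyclicity of the trace and the fact that the $X$- and $B_i$-operators commute with the other ports and with $E^i$ on the disjoint factors), tracking which subsystems the swap, the POVM element, and each entangled pair act on, and verifying the one-line ricochet identity $\Tr_{B_i}[\mathbbm{F}_{XB_i}\Phi^+_{A_iB_i}]=\Phi^+_{XA_i}$. No limits, estimates, or representation theory enter, so once the swap is pinned down the result follows by a direct index computation.
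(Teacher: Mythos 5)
Your proof is correct: the reduction of the entanglement fidelity to the elementwise sum, the assembly of the swap operator $\mathbbm{F}_{XB_i}$, and the ricochet identity $\Tr_{B_i}[\mathbbm{F}_{XB_i}\Phi^+_{A_iB_i}]=\Phi^+_{XA_i}$ together with $\Tr_{B_j}[\Phi^+_{A_jB_j}]=\tfrac{1}{d}\mathbbm{1}_{A_j}$ for the idle ports all check out and yield exactly $\tfrac{1}{d^2}\sum_i\Tr[E^i\rho^i]$. The paper itself does not prove this lemma but imports it from the cited references, and your argument is precisely the standard derivation used there, so there is nothing to compare beyond noting agreement.
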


\noindent By applying this lemma for standard PBTC, we obtain the following corollary.
\begin{cor}\label{cor:F of PBTC}
Let us consider the standard PBTC channel~$\mathcal{D}^\mathrm{std}_{N,M}$ and quantum channel $\mathcal{R}$ that traces over all subsystems except the first one. The entanglement fidelity of the quantum channel $\mathcal{R}\circ\mathcal{D}^\mathrm{std}_{N,M}$ is given by
\begin{align} 
    F(\mathcal{R}\circ\mathcal{D}^\mathrm{std}_{N,M})=\frac{1}{d^2}\sum_{I\in\mathcal{I}_N^M}\Tr\left[E^I_{A^NB}\rho^{i_1}_{A^NB}\right],
\end{align}
where $i_1$ is the smallest number of $I$ and $\{E^I_{XA^N}\}_{I\in\mathcal{I}_N^M}$ is the PGM for $\{(|\mathcal{I}_N^M|^{-1},\eta^I_{A^NB})\}_{I\in\mathcal{I}_N^M}$.
\end{cor}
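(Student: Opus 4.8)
The plan is to recognize that, once every clone but the first is discarded, $\mathcal{R}\circ\mathcal{D}^\mathrm{std}_{N,M}$ is simply an ordinary PBT channel, and then to invoke Lemma~\ref{lem:entanglement fidelity of standard dPBT}. First I would compose the two maps explicitly. Tracing out $B_I^c$ in $\mathcal{D}^\mathrm{std}_{N,M}$ and then tracing out $B_2\cdots B_M$ in $\mathcal{R}$ together retain only the single port $B_{i_1}$, where $i_1=\min I$ is the smallest index of $I$ (recall that $\mathcal{D}_{N,M}$ relabels $B_{i_k}$ as $B_k$, so $B_1=B_{i_1}$). Hence
\begin{align}
\mathcal{R}\circ\mathcal{D}^\mathrm{std}_{N,M}(\sigma_X)=\sum_{I\in\mathcal{I}_N^M}\Tr_{XA^NB_{i_1}^c}\left[E^I_{XA^N}\left(\sigma_X\otimes\Phi_{A^NB^N}\right)\right],
\end{align}
which has exactly the form of a PBT channel whose outcome set is $\mathcal{I}_N^M$ and in which outcome $I$ routes the input to port $B_{i_1(I)}$.

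Next I would run the computation behind Lemma~\ref{lem:entanglement fidelity of standard dPBT} outcome by outcome. Writing $F(\mathcal{R}\circ\mathcal{D}^\mathrm{std}_{N,M})=\bra{\Phi^+_{BR}}(\mathcal{R}\circ\mathcal{D}^\mathrm{std}_{N,M}\otimes\mathrm{id}_R)(\Phi^+_{XR})\ket{\Phi^+_{BR}}$ and inserting the display above, each summand is treated independently. For fixed $I$, tracing out the ports $B_j$ with $j\neq i_1$ collapses each maximally entangled pair $\Phi^+_{A_jB_j}$ to $\tfrac1d\mathbbm{1}_{A_j}$, producing the factor $\tfrac1{d^{N-1}}\mathbbm{1}_{A_{i_1}^c}$, while the surviving pair $\Phi^+_{A_{i_1}B_{i_1}}$, together with $\Phi^+_{XR}$ and the projector $\Phi^+_{B_{i_1}R}$, combines under the isomorphism $X\cong B$ into $\Phi^+_{A_{i_1}B}$ with overall prefactor $\tfrac1{d^2}$. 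These are exactly the maximally entangled ``ricochet'' identities already used in Lemma~\ref{lem:entanglement fidelity of standard dPBT}, so the reduction of each term is $\tfrac1{d^2}\Tr[E^I_{A^NB}\rho^{i_1}_{A^NB}]$ with $\rho^{i_1}_{A^NB}=\Phi^+_{A_{i_1}B}\otimes\tfrac1{d^{N-1}}\mathbbm{1}_{A_{i_1}^c}$. Summing over $I\in\mathcal{I}_N^M$ gives the claimed formula.

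The main obstacle is that the per-outcome reduction in Lemma~\ref{lem:entanglement fidelity of standard dPBT} was originally set up with a bijection $i\mapsto B_i$ between outcomes and retained ports, whereas here the assignment $I\mapsto i_1(I)$ is many-to-one, since many subsets share the same smallest element. I would therefore have to check that this reduction is genuinely \emph{local}, i.e., that it depends only on the POVM element $E^I$ and on which single port is retained and is thus insensitive to the multiplicity of the outcome-to-port map; granting this, the sum over $I$ is immediate. A secondary technical point is that the retained signal state $\rho^{i_1}_{A^NB}$ is not symmetric, while $\{E^I\}$ is built from the symmetrized states $\eta^I_{A^NB}$ and must be completed to a full measurement by $\Delta$ of \eqref{eq:delta}. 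Here Proposition~\ref{prop:P invariance of E} is the right tool: since $\Pi_{A_I}E^I_{A^NB}\Pi_{A_I}=E^I_{A^NB}$, I may insert the symmetric projector freely to obtain $\Tr[E^I\rho^{i_1}]=\Tr[E^I\,\Pi_{A_I}\rho^{i_1}\Pi_{A_I}]=\tfrac{d[M]}{d^M}\Tr[E^I\eta^I]$, which shows that the choice of $i_1\in I$ labeling the signal is immaterial (the right-hand side is manifestly $S_I$-invariant); the remaining care is to confirm that the defect $\Delta$ is inert on the relevant fidelity, for which I would use that $\Tr[\Delta\,\eta^I]=0$ on the defining ensemble together with the above replacement of $\rho^{i_1}$ by $\eta^I$.
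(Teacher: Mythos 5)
Your proof is correct and follows essentially the same route as the paper, which obtains the corollary by directly applying Lemma~\ref{lem:entanglement fidelity of standard dPBT} to the composed channel; your explicit observation that the per-outcome ``ricochet'' reduction depends only on the POVM element and on which single port is retained --- and is therefore insensitive to the fact that $I\mapsto i_1(I)$ is many-to-one --- is precisely what makes that application legitimate, and the paper leaves it implicit. One caveat on your closing remark about the completion operator $\Delta$: the proposed justification does not go through. The replacement $\Tr[X\rho^{i_1}_{A^NB}]=\tfrac{d[M]}{d^M}\Tr[X\eta^I_{A^NB}]$ requires $\Pi_{A_I}X\Pi_{A_I}=X$, which Proposition~\ref{prop:P invariance of E} guarantees for $E^I_{A^NB}$ but not for $\Delta$; and in fact $\mathrm{supp}(\rho^{i_1}_{A^NB})\not\subseteq\mathrm{supp}(\bar{\eta}_{A^NB})$ for $M\geq 2$ (the vectors $\ket{\Phi^+}_{A_{i_1}B}\ket{k_1\cdots k_{M-1}}$ are not $S_I$-symmetric), so $\Tr[\Delta\,\rho^{i_1}_{A^NB}]>0$ in general. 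This does not damage the corollary as literally stated, whose POVM elements are the bare PGM operators, and in the proof of Theorem~\ref{thm:main result} adding $\Delta$ can only increase each term, so the lower bound there survives; but you should drop the claim that $\Delta$ is inert rather than try to prove it.
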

\noindent The following lemma provides the lower bound of the success probability for the state-discrimination problem of PGM. It is proportional to the entanglement fidelity, as expressed by Lemma \ref{lem:entanglement fidelity of standard dPBT}.
\begin{lem}[\cite{beigi2011simplified}]
\label{lem:general lower bound of PGM}
Let $\{E^i\}_{i=1}^N$ be the PGM for any state ensemble $\{(1/N, \sigma^i)\}_{i=1}^N$. Then, the success probability for the state-discrimination problem
\begin{align}
    p_\mathrm{succ}=\frac{1}{N}\sum_{i=1}^N\Tr\left[E^i\sigma^i\right]
\end{align}
satisfies the following inequality:
\begin{align}
     p_\mathrm{succ}\geq\frac{1}{N\bar{r}\Tr\bar{\sigma}^2},
\end{align}
where
\begin{align}\label{eq:parts}
    \bar{r}=\frac{1}{N}\sum_{i=1}^N\rank{\sigma^i}\ ,\ \bar{\sigma}=\frac{1}{N}\sum_{i=1}^N{\sigma^i}.
\end{align}
\end{lem}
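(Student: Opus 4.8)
The plan is to recast $p_\mathrm{succ}$ in a Hilbert--Schmidt form and then apply Cauchy--Schwarz twice followed by a single H\"{o}lder inequality. Substituting the PGM elements $E^i=\frac{1}{N}\bar\sigma^{-1/2}\sigma^i\bar\sigma^{-1/2}$ into $p_\mathrm{succ}=\frac{1}{N}\sum_i\Tr[E^i\sigma^i]$ and using cyclicity of the trace, I would write $p_\mathrm{succ}=\frac{1}{N^2}\sum_i\Tr[(X^i)^2]$ with $X^i\coloneq\bar\sigma^{-1/4}\sigma^i\bar\sigma^{-1/4}$. Two features of this form drive the proof: each $X^i$ is positive semidefinite with $\rank X^i=\rank\sigma^i$ (since $\bar\sigma^{-1/4}$ is invertible on $\mathrm{supp}\,\bar\sigma\supseteq\mathrm{supp}\,\sigma^i$), and $\sum_i X^i=\bar\sigma^{-1/4}\big(\sum_i\sigma^i\big)\bar\sigma^{-1/4}=N\bar\sigma^{1/2}$.

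First I would bound each term by its rank: for a positive semidefinite $X^i$ of rank $r_i\coloneq\rank\sigma^i$, Cauchy--Schwarz on its nonzero eigenvalues gives $\Tr[(X^i)^2]\geq(\Tr X^i)^2/r_i$. Summing over $i$ and applying Cauchy--Schwarz a second time to the sequences $\Tr X^i/\sqrt{r_i}$ and $\sqrt{r_i}$ yields $\sum_i(\Tr X^i)^2/r_i\geq\big(\sum_i\Tr X^i\big)^2/\sum_i r_i$. Using $\sum_i r_i=N\bar r$ and $\sum_i\Tr X^i=\Tr[N\bar\sigma^{1/2}]=N\Tr\bar\sigma^{1/2}$, this gives the intermediate bound $p_\mathrm{succ}\geq(\Tr\bar\sigma^{1/2})^2/(N\bar r)$.

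It then remains to show $(\Tr\bar\sigma^{1/2})^2\geq1/\Tr\bar\sigma^2$. Writing $\mu_k$ for the eigenvalues of the density operator $\bar\sigma$ (so $\sum_k\mu_k=1$), I would apply H\"{o}lder's inequality to the factorization $\mu_k=\mu_k^{2/3}\cdot\mu_k^{1/3}$ with conjugate exponents $3$ and $3/2$, obtaining $1=\sum_k\mu_k\leq(\sum_k\mu_k^2)^{1/3}(\sum_k\mu_k^{1/2})^{2/3}=(\Tr\bar\sigma^2)^{1/3}(\Tr\bar\sigma^{1/2})^{2/3}$. Cubing and rearranging gives the desired inequality, and substituting it into the intermediate bound produces exactly $p_\mathrm{succ}\geq1/(N\bar r\,\Tr\bar\sigma^2)$.

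The main obstacle I anticipate is this final conversion rather than the Cauchy--Schwarz steps, which are routine. The square-root weighting of the PGM forces $\Tr\bar\sigma^{1/2}$ to appear when the signal states are summed, so the nontrivial observation is that the correct bridge from the $1/2$-power to the $2$-power is a H\"{o}lder inequality with the specific exponents $3$ and $3/2$, and that this step uses normalization $\Tr\bar\sigma=1$ in an essential way. The only other point to verify carefully is the support inclusion $\mathrm{supp}\,\sigma^i\subseteq\mathrm{supp}\,\bar\sigma$, which is automatic because $\bar\sigma$ is an average of the $\sigma^i$ and which guarantees $\rank X^i=\rank\sigma^i$.
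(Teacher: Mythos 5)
Your proof is correct and self-contained; note that the paper itself offers no proof of this lemma, importing it directly from Ref.~\cite{beigi2011simplified}, so there is nothing internal to compare against. All the steps check out: the identity $p_\mathrm{succ}=\frac{1}{N^2}\sum_i\Tr[(X^i)^2]$ with $X^i=\bar\sigma^{-1/4}\sigma^i\bar\sigma^{-1/4}$ follows from cyclicity; the support inclusion $\mathrm{supp}\,\sigma^i\subseteq\mathrm{supp}\,\bar\sigma$ (from $\bar\sigma\geq\frac{1}{N}\sigma^i$) does guarantee $\rank X^i=\rank\sigma^i$; the spectral Cauchy--Schwarz $\Tr[(X^i)^2]\geq(\Tr X^i)^2/r_i$ and the Engel-form Cauchy--Schwarz over $i$ give $p_\mathrm{succ}\geq(\Tr\bar\sigma^{1/2})^2/(N\bar r)$ using $\sum_i X^i=N\bar\sigma^{1/2}$; and the H\"older step with exponents $(3,3/2)$ applied to $\mu_k=\mu_k^{2/3}\mu_k^{1/3}$ correctly yields $(\Tr\bar\sigma^{1/2})^2\,\Tr\bar\sigma^2\geq(\Tr\bar\sigma)^3=1$, which is tight exactly when $\bar\sigma$ has flat spectrum. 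Your route is in the same spirit as the original Beigi--K\"onig argument (Hilbert--Schmidt Cauchy--Schwarz against the signal states), but your final H\"older bridge from $\Tr\bar\sigma^{1/2}$ to $1/\Tr\bar\sigma^2$ is a clean packaging of the step where the collision term enters, and it makes transparent where the normalization $\Tr\bar\sigma=1$ is used. One cosmetic caveat: the paper's PGM elements sum only to the projector onto $\mathrm{supp}\,\bar\sigma$ and are completed by the operator $\Delta$ of Eq.~\eqref{eq:delta}; since $\Tr[\Delta\sigma^i]=0$, this does not affect your bound, but it is worth a sentence.
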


To utilize Lemma~\ref{lem:general lower bound of PGM}, we calculate the values of \eqref{eq:parts} for the state ensemble $\{(|\mathcal{I}_N^M|^{-1},\eta^I_{XA^N})\}_{I\in\mathcal{I}_N^M}$. The average rank is given by the following proposition.
\begin{prop}\label{prop:bar r}
$\binom{N}{M}^{-1}\sum_{I\in\mathcal{I}_N^M}\rank{\eta^I_{A^NB}}=d[M-1]d^{N-M}.$
\end{prop}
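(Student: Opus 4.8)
\emph{Proof plan.} The first move is to reduce the average over $I$ to a single term by symmetry. Combining $V_\sigma\rho^{i_1}_{A^NB}V_\sigma^\dagger=\rho^{\sigma(i_1)}_{A^NB}$ with Corollary~\ref{cor:sigma and P}, and using the remark that $\eta^I_{A^NB}$ is independent of which element of $I$ is chosen as its index, one obtains $V_\sigma\eta^I_{A^NB}V_\sigma^\dagger=\eta^{\sigma(I)}_{A^NB}$ for every $\sigma\in S_N$. Since $V_\sigma$ is unitary it preserves rank, and since $S_N$ acts transitively on $\mathcal{I}_N^M$, all the operators $\eta^I_{A^NB}$ share a common rank. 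Hence the left-hand side equals $\rank\eta^I_{A^NB}$ for the single choice $I=\{1,\ldots,M\}$, and it remains to evaluate this one number.

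For $I=\{1,\ldots,M\}$ I would split the mixed part of $\rho^{1}_{A^NB}=\Phi^+_{A_1B}\otimes\frac{1}{d^{N-1}}\mathbbm{1}_{A_1^c}$ across the register $A_2\cdots A_M$ (inside $A_I$) and the register $A_I^c$ (untouched by $\Pi_{A_I}$), writing
\begin{align}
\eta^I_{A^NB}=\Big[\tfrac{d^M}{d[M]}\Pi_{A_I}\big(\Phi^+_{A_1B}\otimes\tfrac{1}{d^{M-1}}\mathbbm{1}_{A_2\cdots A_M}\big)\Pi_{A_I}\Big]\otimes\tfrac{1}{d^{N-M}}\mathbbm{1}_{A_I^c}.
\end{align}
Because rank is multiplicative over tensor factors and unaffected by positive scalar prefactors, $\rank\eta^I_{A^NB}=d^{N-M}\cdot\rank\omega$, where $\omega\coloneq\Pi_{A_I}\big(\Phi^+_{A_1B}\otimes\mathbbm{1}_{A_2\cdots A_M}\big)\Pi_{A_I}$ acts on $A_1\cdots A_MB$. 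The whole problem thus collapses to showing $\rank\omega=d[M-1]$.

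To compute $\rank\omega$ I would exhibit $\omega$ as a Gram operator. Define the linear map $T:\mathcal{H}^{\otimes(M-1)}_{A_2\cdots A_M}\to\mathcal{H}^{\otimes M}_{A_1\cdots A_M}\otimes\mathcal{H}_B$ by $T\ket{\psi}\coloneq\Pi_{A_I}\big(\ket{\Phi^+}_{A_1B}\otimes\ket{\psi}_{A_2\cdots A_M}\big)$; expanding $\mathbbm{1}_{A_2\cdots A_M}$ in an orthonormal basis shows $\omega=TT^\dagger$, so $\rank\omega=\rank T=\rank(T^\dagger T)$. A direct calculation, contracting the $B$ register through $\ket{\Phi^+}_{A_1B}=\frac{1}{\sqrt d}\sum_k\ket{k}_{A_1}\ket{k}_B$ and using $\Pi_{A_I}^2=\Pi_{A_I}$, gives $T^\dagger T=\frac{1}{d}\Tr_{A_1}[\Pi_{A_I}]$. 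Invoking the standard partial-trace identity $\Tr_{A_1}[\Pi_{A_I}]=\frac{d+M-1}{M}\Pi_{M-1}$, where $\Pi_{M-1}$ is the symmetric projector on $A_2\cdots A_M$, yields $T^\dagger T=\frac{d+M-1}{dM}\Pi_{M-1}$, a positive multiple of a projector of rank $\dim\mathrm{Sym}^{M-1}(\mathbbm{C}^d)=d[M-1]$. Therefore $\rank\omega=d[M-1]$ and $\rank\eta^I_{A^NB}=d[M-1]d^{N-M}$, which is the claim.

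The main obstacle is the computation in the last paragraph: recognizing $\omega=TT^\dagger$ so that its rank is controlled by $T^\dagger T$, and evaluating $T^\dagger T$ to the clean form $\frac{d+M-1}{dM}\Pi_{M-1}$. Everything else, namely the symmetry reduction and the tensor factorization, is bookkeeping. The one external input required is the identity $\Tr_{A_1}[\Pi_M]=\frac{d+M-1}{M}\Pi_{M-1}$ (with $\frac{d+M-1}{M}=d[M]/d[M-1]$), which follows from the representation $\Pi_M=d[M]\int\mathrm{d}\phi\,\ket{\phi}\bra{\phi}^{\otimes M}$ and the analogous formula for $\Pi_{M-1}$.
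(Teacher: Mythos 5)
Your proof is correct, but it reaches $\rank\eta^I_{A^NB}=d[M-1]d^{N-M}$ by a genuinely different route than the paper. The paper works directly with the range of $\Pi_{A_I}(\Phi^+_{A_{i_1}B}\otimes\mathbbm{1}_{A_I\setminus A_{i_1}})\Pi_{A_I}$: it lists the vectors $\Pi_{A_I}\ket{\Phi^+}_{A_{i_1}B}\ket{k_1\cdots k_{M-1}}$, shows that two of them coincide exactly when the labels $k_1,\ldots,k_{M-1}$ agree up to permutation (distinct ones being mutually orthogonal), and counts the resulting multisets to get $d[M-1]$ — a combinatorial argument that the paper then reuses in Lemma~\ref{lem:upper bound of Tr when I=J} to bound the individual eigenvalues. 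You instead factor $\omega=TT^\dagger$ and pass to $T^\dagger T=\frac{1}{d}\Tr_{A_1}[\Pi_{A_I}]=\frac{d+M-1}{dM}\Pi_{M-1}$, so the rank is read off as $\dim\mathrm{Sym}^{M-1}(\mathbbm{C}^d)$ via the standard identity $\Tr_{A_1}[\Pi_M]=\frac{d[M]}{d[M-1]}\Pi_{M-1}$; your symmetry reduction to a single $I$ and the tensor split off of $\mathbbm{1}_{A_I^c}$ are both sound (and the paper effectively does the same implicitly, since its computation is for arbitrary $I$). Your Gram-operator argument is cleaner for the rank alone and gives the spectrum of $T^\dagger T$ for free, but note it yields less than the paper's explicit eigenvector description, which the paper needs again later when it bounds $\Tr[(\eta^I_{A^NB})^2]$ by estimating how many permutations fix a given multiset; if you wanted to follow your route through the rest of Sec.~\ref{subsec:proof} you would still have to recover that spectral information (e.g., from the eigenvalues of $TT^\dagger$, which coincide with those of $T^\dagger T$ only up to the nontrivial kernel structure of $T$ on non-symmetric inputs).
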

\begin{proof}
For any $I=\{i_1,\ldots,i_M\}\in\mathcal{I}_N^M$, we have
  \begin{align}
\eta^I_{A^NB}=\frac{d^{M-N+1}}{d[M]}\Pi_{A_I}(\Phi^+_{A_{i_1}B}\otimes\mathbbm{1}_{A_I\setminus A_{i_1}})\Pi_{A_I}\otimes\mathbbm{1}_{A_I^c}.
  \end{align}
 The set $W$ of eigenvectors corresponding to the non-zero eigenvalues of $\Pi_{A_I}(\Phi^+_{A_{i_1}B}\otimes\mathbbm{1}_{A_I\setminus A_{i_1}})\Pi_{A_I}$ is
\begin{align}
W=\left\{\Pi_{A_I}\ket{\Phi^+}_{A_{i_1}B}\ket{k_1\cdots k_{M-1}}_{A_I\setminus A_{i_1}}\right\}_{k_1,\ldots,k_{M-1}=1}^d,
\end{align}
where $\{\ket{k}\}_{k=1}^d$ is the orthonormal basis of $\mathcal{H}$. Since $\rank\mathbbm{1}_{A_I^c}=d^{N-M}$, it is sufficient to show $|W|=d[M-1]$. Here,
\begin{align}\label{eq:eq in rank}
  &\Pi_{A_I}\ket{\Phi^+}_{A_{i_1}B}\ket{k_1\cdots k_{M-1}}_{A_I\setminus A_{i_1}}\nonumber
  \\=&\frac{1}{\sqrt{d}}\sum_{i=1}^d\Pi_{A_I}\ket{i\ k_1\cdots k_{M-1}}_{A_I}\ket{i}_B.
\end{align}
Thus, 
\begin{align}
  &\Pi_{A_I}\ket{\Phi^+}_{A_{i_1}B}\ket{k_1\cdots k_{M-1}}_{A_I\setminus A_{i_1}}\nonumber
  \\=&\Pi_{A_I}\ket{\Phi^+}_{A_{i_1}B}\ket{l_1\cdots l_{M-1}}_{A_I\setminus A_{i_1}}
\end{align}
holds for $k_1,\ldots,k_{M-1},l_1,\ldots,l_{M-1}\in\{1,\ldots d\}$ if and only if
\begin{align}\label{eq:constraint}
  \Pi_{A_I}\ket{i\ k_1\cdots k_{M-1}}_{A_I}=\Pi_{A_I}\ket{i\ l_1\cdots l_{M-1}}_{A_I}
\end{align}
holds for each $i\in\{1,\ldots,d\}$. Equation \eqref{eq:constraint} holds if and only if $k_1,\ldots,k_{M-1}$ and $l_1,\ldots,l_{M-1}$ match after permutation. Thus, $|W|$ equals the number of combinations with repetition of selecting $M-1$ elements from $\{1,\ldots,d\}$. Hence, $|W|=d[M-1]$.
\end{proof}

Next, we estimate $\Tr{\bar{\eta}^2_{A^NB}}$.
\begin{lem}\label{lem:Tr depends on multiplicity of alphabets}
  Let $I,J,K,L\in\mathcal{I}_N^M$. If $|I\cap J|=|K\cap L|$, then $\Tr\left[\eta^I_{A^NB}\eta^J_{A^NB}\right]=\Tr\left[\eta^K_{A^NB}\eta^L_{A^NB}\right]$.
\end{lem}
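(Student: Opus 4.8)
The plan is to reduce the claim to a symmetry argument based on the permutation covariance already established in Corollary~\ref{cor:sigma and P}. Concretely, I would first show that whenever $|I\cap J|=|K\cap L|$ there exists a permutation $\sigma\in S_N$ with $\sigma(I)=K$ and $\sigma(J)=L$ \emph{simultaneously}. Granting this, the lemma follows at once: since $V_\sigma$ (extended as $V_\sigma\otimes\mathbbm{1}_B$) is unitary and the trace is invariant under conjugation, I would write
\begin{align}
\Tr\!\left[\eta^I_{A^NB}\eta^J_{A^NB}\right]
=\Tr\!\left[V_\sigma\eta^I_{A^NB}V_\sigma^\dagger\,V_\sigma\eta^J_{A^NB}V_\sigma^\dagger\right],
\end{align}
so it remains only to identify $V_\sigma\eta^I_{A^NB}V_\sigma^\dagger=\eta^{\sigma(I)}_{A^NB}$ and likewise for $J$.

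The combinatorial step is the construction of $\sigma$. Setting $t:=|I\cap J|=|K\cap L|$, the pair $(I,J)$ partitions $\{1,\ldots,N\}$ into the four blocks $I\cap J$, $I\setminus J$, $J\setminus I$, and $(I\cup J)^c$, of sizes $t$, $M-t$, $M-t$, and $N-2M+t$, respectively; the pair $(K,L)$ induces an analogous partition with \emph{identical} block sizes, precisely because $|K|=|L|=M$ and $|K\cap L|=t$. I would then take any bijection $\sigma$ of $\{1,\ldots,N\}$ mapping each block of the $(I,J)$-partition onto the corresponding block of the $(K,L)$-partition. Since $I=(I\cap J)\cup(I\setminus J)$ is mapped to $(K\cap L)\cup(K\setminus L)=K$, and analogously $\sigma(J)=L$, such a $\sigma$ does the job.

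For the covariance $V_\sigma\eta^I_{A^NB}V_\sigma^\dagger=\eta^{\sigma(I)}_{A^NB}$, I would expand $\eta^I_{A^NB}=\frac{d^M}{d[M]}\Pi_{A_I}\rho^{i_1}_{A^NB}\Pi_{A_I}$, insert $V_\sigma^\dagger V_\sigma$ between the factors, and apply Corollary~\ref{cor:sigma and P}, which gives $V_\sigma\Pi_{A_I}V_\sigma^\dagger=\Pi_{A_{\sigma(I)}}$, together with $V_\sigma\rho^{i_1}_{A^NB}V_\sigma^\dagger=\rho^{\sigma(i_1)}_{A^NB}$, exactly as in the proof of Proposition~\ref{prop:P invariance of E}. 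This yields $V_\sigma\eta^I_{A^NB}V_\sigma^\dagger=\frac{d^M}{d[M]}\Pi_{A_{\sigma(I)}}\rho^{\sigma(i_1)}_{A^NB}\Pi_{A_{\sigma(I)}}$, which equals $\eta^{\sigma(I)}_{A^NB}=\eta^K_{A^NB}$ by the index-independence noted after Eq.~\eqref{eq:signal state of PBTC}; I must invoke that remark here because $\sigma(i_1)$ need not be the smallest element of $K$. Substituting into the displayed trace identity then gives $\Tr[\eta^I_{A^NB}\eta^J_{A^NB}]=\Tr[\eta^K_{A^NB}\eta^L_{A^NB}]$. The only step requiring genuine care is the combinatorial one — not its difficulty, but the bookkeeping to confirm that matching block sizes really do force $\sigma(I)=K$ and $\sigma(J)=L$ together; everything else is a direct application of results already proved.
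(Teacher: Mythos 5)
Your proposal is correct and follows essentially the same route as the paper: conjugation by $V_\sigma$, the covariance $V_\sigma\eta^I_{A^NB}V_\sigma^\dagger=\eta^{\sigma(I)}_{A^NB}$ via Corollary~\ref{cor:sigma and P}, and the existence of a single $\sigma$ with $\sigma(I)=K$, $\sigma(J)=L$. The only difference is that you spell out the four-block matching argument for constructing $\sigma$, which the paper states without proof; your care in invoking the index-independence of $\eta^{\sigma(I)}$ when $\sigma(i_1)$ is not the minimum of $K$ is also a correct (if minor) refinement.
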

\begin{proof}
  For any $\sigma\in S_N$ and $I=\{i_1,\ldots,i_M\}\in\mathcal{I}_N^M$, we have
  \begin{align}
    \eta^I_{A^NB}&=\frac{d^M}{d[M]}\Pi_{A_I}\rho^{i_1}_{A^NB}\Pi_{A_I}\nonumber
    \\&=\frac{d^M}{d[M]}V_\sigma^\dagger V_\sigma \Pi_{A_I}V_\sigma^\dagger V_\sigma\rho^{i_1}_{A^NB}V_\sigma^\dagger V_\sigma \Pi_{A_I}V_\sigma^\dagger V_\sigma\nonumber
    \\&=\frac{d^M}{d[M]}V_\sigma^\dagger \Pi_{A_{\sigma(I)}}\rho^{\sigma(i_1)}_{A^NB}\Pi_{A_{\sigma(I)}}V_\sigma\nonumber
    \\&=V_\sigma^\dagger\eta^{\sigma(I)}_{A^NB}V_\sigma.
  \end{align}
  The second equality uses $V_\sigma^{-1}=V_\sigma^\dagger$, and the third equality uses $V_\sigma\rho^{i_1}_{A^NB}V_{\sigma}^\dagger=\rho^{\sigma(i_1)}_{A^NB}$ and Corollary \ref{cor:sigma and P}. Therefore, for any $\sigma\in S_N$ and $I,J\in\mathcal{I}_N^M$, we obtain
  \begin{align}
    \Tr\left[\eta^I_{A^NB}\eta^J_{A^NB}\right]&=\Tr\left[(V_\sigma^\dagger\eta^{\sigma(I)}_{A^NB}V_\sigma)(V_\sigma^\dagger\eta^{\sigma(J)}_{A^NB}V_\sigma)\right]\nonumber
    \\&=\Tr\left[\eta^{\sigma(I)}_{A^NB}\eta^{\sigma(J)}_{A^NB}\right].
  \end{align}
Note that $|I\cap J|=|\sigma(I)\cap\sigma(J)|$. Moreover, for any $K,L\in\mathcal{I}_N^M$ satisfying $|I\cap J|=|K\cap L|$, a $\sigma\in S_N$ such that $\sigma(I)=K$ and $\sigma(J)=L$ exists. Thus, the proposition is proved.
\end{proof}

\begin{lem}\label{lem:Tr takes maximum value when I=J}
  For any $I,J\in\mathcal{I}_N^M$, it holds that $\Tr\left[\eta^I_{A^NB}\eta^J_{A^NB}\right]\leq\Tr\left[(\eta^I_{A^NB})^2\right]$.
\end{lem}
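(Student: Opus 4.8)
The plan is to prove this as a direct consequence of the Cauchy--Schwarz inequality for the Hilbert--Schmidt inner product, rather than by any explicit diagonalization. The key observation is that the statement asserts that the overlap $\Tr[\eta^I_{A^NB}\eta^J_{A^NB}]$ is largest on the diagonal $I=J$, and by Lemma~\ref{lem:Tr depends on multiplicity of alphabets} this overlap is controlled entirely by the size of the intersection $|I\cap J|$; the diagonal case is simply $|I\cap I|=M$.

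First I would record that each $\eta^I_{A^NB}$ is positive semidefinite, hence Hermitian. This is immediate from its defining form \eqref{eq:signal state of PBTC}, $\eta^I_{A^NB}=\frac{d^M}{d[M]}\Pi_{A_I}\rho^{i_1}_{A^NB}\Pi_{A_I}$, since $\rho^{i_1}_{A^NB}$ is a density operator, $\Pi_{A_I}$ is an orthogonal projection (in particular self-adjoint), and the scalar prefactor is positive; sandwiching a positive operator between $\Pi_{A_I}$ and its adjoint preserves positivity. As a consequence the overlap $\Tr[\eta^I_{A^NB}\eta^J_{A^NB}]$ is real and nonnegative for every pair $I,J$.

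Next I would apply Cauchy--Schwarz in the form $\Tr[\eta^I_{A^NB}\eta^J_{A^NB}]\le\sqrt{\Tr[(\eta^I_{A^NB})^2]\,\Tr[(\eta^J_{A^NB})^2]}$, where the positivity from the previous step lets me drop the absolute value on the left. The remaining ingredient is to show that the two diagonal overlaps coincide, $\Tr[(\eta^I_{A^NB})^2]=\Tr[(\eta^J_{A^NB})^2]$. This follows from the permutation covariance $\eta^I_{A^NB}=V_\sigma^\dagger\eta^{\sigma(I)}_{A^NB}V_\sigma$ already established in the proof of Lemma~\ref{lem:Tr depends on multiplicity of alphabets}: since $I$ and $J$ are subsets of $\{1,\ldots,N\}$ of the same cardinality $M$, there is a permutation $\sigma\in S_N$ with $\sigma(I)=J$, so $\eta^I_{A^NB}$ and $\eta^J_{A^NB}$ are unitarily equivalent and their squared Hilbert--Schmidt norms agree. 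Substituting this equality collapses the right-hand side to $\Tr[(\eta^I_{A^NB})^2]$, which is exactly the claim.

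I do not expect a genuine obstacle here: the argument is essentially one line of Cauchy--Schwarz once positivity and the unitary equivalence of all the $\eta^I_{A^NB}$ are in place. The only points needing a word of care are verifying the nonnegativity of the overlap so that the absolute value can be removed, and noting that any two index sets of equal size are related by a global permutation, which is what makes all diagonal terms equal. An alternative route through the explicit spectrum of $\eta^I_{A^NB}$ would be far more laborious and is unnecessary.
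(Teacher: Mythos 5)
Your argument is correct and is essentially the paper's own proof: both rest on the Hilbert--Schmidt Cauchy--Schwarz inequality combined with the fact that $\eta^J_{A^NB}=V_\sigma\eta^I_{A^NB}V_\sigma^\dagger$ for some permutation $\sigma$, which makes the two diagonal terms equal and collapses the bound to $\Tr[(\eta^I_{A^NB})^2]$. The only cosmetic difference is that the paper substitutes $\eta^{\sigma(I)}$ for $B$ from the outset rather than equating the two norms afterward, and your explicit remark on positivity (to drop the absolute value) is a welcome touch the paper handles more tersely.
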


\begin{proof}
  Applying the Cauchy-Schwarz inequality to the Hilbert-Schmidt inner product, we have
  \begin{align}
     \left|\Tr[A^\dagger B]\right|\leq\sqrt{\Tr\left[A^\dagger A\right]}\sqrt{\Tr\left[B^\dagger B\right]}.
  \end{align}
 for any $A,B\in\mathcal{L}(\mathcal{H}^{\otimes N+1})$. By setting $A=\eta^I_{A^NB}$ and $B=V_\sigma\eta^I_{A^NB}V_\sigma^\dagger=\eta^{\sigma(I)}_{A^NB}$ for $\sigma\in S_N$, the left-hand side can be written as
\begin{align}
  \left|\Tr\left[\eta^I_{A^NB}\eta^{\sigma(I)}_{A^NB}\right]\right|=\Tr\left[\eta^I_{A^NB}\eta^{\sigma(I)}_{A^NB}\right],
\end{align}
while the right-hand side becomes
\begin{align}
&\sqrt{\Tr\left[\eta^I_{A^NB}\eta^I_{A^NB}\right]}\sqrt{\Tr\left[(V_\sigma\eta^I_{A^NB}V_\sigma^\dagger)(V_\sigma\eta^I_{A^NB}V_\sigma^\dagger)\right]}\nonumber
  \\=&\Tr\left[(\eta^I_{A^NB})^2\right].
\end{align}
Since $\sigma\in S_N$ is arbitrary, $J=\sigma(I)$ is arbitrary.
\end{proof}

\begin{lem}\label{lem:upper bound of Tr when I=J}
For any $I\in\mathcal{I}_N^M$, the following inequality holds:
\begin{align}
    \Tr\left[(\eta^I_{A^NB})^2\right]\leq\frac{d^{M-N+2}}{d[M]}\frac{M!(M-1)!}{d+M-1}.
\end{align}
\end{lem}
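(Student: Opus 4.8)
The plan is to bound the Hilbert--Schmidt norm of $\eta^I_{A^NB}$ through the elementary estimate $\Tr[A^2]\le\|A\|_\infty\Tr[A]$, valid for every positive semidefinite $A$ (in the eigenbasis, $\sum_k\lambda_k^2\le(\max_k\lambda_k)\sum_k\lambda_k$), where $\|\cdot\|_\infty$ denotes the operator norm, i.e. the largest eigenvalue. Since $\eta^I_{A^NB}=\frac{d^M}{d[M]}\Pi_{A_I}\rho^{i_1}_{A^NB}\Pi_{A_I}$ from \eqref{eq:signal state of PBTC} is positive semidefinite, it is enough to control its operator norm $\|\eta^I_{A^NB}\|_\infty$ and its trace $\Tr[\eta^I_{A^NB}]$ separately and to combine them at the end.

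For the trace, $\eta^I_{A^NB}$ is a properly normalized signal state, so $\Tr[\eta^I_{A^NB}]=1$; this can also be checked directly, since tracing out $B$ in $\Phi^+_{A_{i_1}B}$ produces $\frac{1}{d}\mathbbm1_{A_{i_1}}$ and hence $\Tr[\Pi_{A_I}\rho^{i_1}_{A^NB}\Pi_{A_I}]=\Tr[\rho^{i_1}_{A^NB}\Pi_{A_I}]=d[M]/d^{M}$. For the operator norm I would use that conjugation by the projection $\Pi_{A_I}$ cannot increase the norm, $\|\Pi_{A_I}\rho^{i_1}_{A^NB}\Pi_{A_I}\|_\infty\le\|\rho^{i_1}_{A^NB}\|_\infty$, together with $\|\Phi^+\|_\infty=1$ (it is a rank-one projector). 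Since $\rho^{i_1}_{A^NB}=\Phi^+_{A_{i_1}B}\otimes\frac{1}{d^{N-1}}\mathbbm1_{A_{i_1}^c}$ in the convention of this section, this gives $\|\rho^{i_1}_{A^NB}\|_\infty=d^{1-N}$ and therefore
\begin{align}
\|\eta^I_{A^NB}\|_\infty\le\frac{d^M}{d[M]}\,d^{1-N}=\frac{d^{M-N+1}}{d[M]}.
\end{align}
Putting the two pieces together yields $\Tr[(\eta^I_{A^NB})^2]\le\frac{d^{M-N+1}}{d[M]}$.

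The proof then closes with a purely elementary inequality: it remains to check that this implies the stated bound, i.e. that $\frac{d^{M-N+1}}{d[M]}\le\frac{d^{M-N+2}}{d[M]}\frac{M!(M-1)!}{d+M-1}$, equivalently $d+M-1\le d\,M!(M-1)!$, which follows from $d+M-1\le dM$ (equivalently $(d-1)(M-1)\ge0$) combined with $M\le M!(M-1)!$. I do not expect a genuine obstacle here; the only care needed is in the normalization constants and in justifying $\|\Pi_{A_I}\rho^{i_1}_{A^NB}\Pi_{A_I}\|_\infty\le\|\rho^{i_1}_{A^NB}\|_\infty$. It is worth emphasizing that the stated bound is deliberately loose: the operator $\Pi_{A_I}(\Phi^+_{A_{i_1}B}\otimes\mathbbm1_{A_I\setminus A_{i_1}})\Pi_{A_I}$ is in fact proportional to a projection (its nonzero eigenvalues all coincide, with common value $\tfrac{d+M-1}{dM}$ and multiplicity $d[M-1]$ by Proposition~\ref{prop:bar r}), so $\Tr[(\eta^I_{A^NB})^2]$ could even be evaluated exactly and is strictly smaller than the right-hand side for $M\ge2$. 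The main point is that there is no need to pursue this exact evaluation: the crude norm-times-trace estimate already suffices, and the concluding inequality above is all that is required.
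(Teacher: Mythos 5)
Your proof is correct, and it takes a genuinely different route from the paper's. The paper argues spectrally: it reuses the eigenvector analysis from Proposition~\ref{prop:bar r}, bounds every non-zero eigenvalue of $\eta^I_{A^NB}$ by $(M-1)!\,d^{M-N+1}/d[M]$, and multiplies by the rank $d[M-1]d^{N-M}$, which produces exactly the constant in the statement. You instead invoke the elementary estimate $\Tr[A^2]\le\|A\|_\infty\Tr[A]$ for positive semidefinite $A$, and both ingredients check out: $\Tr[\eta^I_{A^NB}]=1$ follows from $\Tr_B\Phi^+_{A_{i_1}B}=\tfrac1d\mathbbm{1}$ and $\Tr\Pi_{A_I}=d[M]d^{N-M}$, and $\|\Pi_{A_I}\rho^{i_1}\Pi_{A_I}\|_\infty\le\|\rho^{i_1}\|_\infty=d^{1-N}$ is just submultiplicativity with $\|\Pi_{A_I}\|_\infty=1$. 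Your intermediate bound $d^{M-N+1}/d[M]$ is in fact tighter than the stated one (their ratio is $d\,M!(M-1)!/(d+M-1)\ge1$), and your reduction of the final step to $d+M-1\le d\,M!(M-1)!$ is handled correctly, including the edge cases $M=1,2$ where $M!(M-1)!=M$. What your approach buys is that it avoids the eigenvalue bookkeeping entirely, at the cost of one extra elementary inequality at the end; since the lemma is only consumed in Proposition~\ref{prop:asymptotic value of bar eta}, where the relevant term carries a vanishing prefactor, either constant works. Your parenthetical claim that $\Pi_{A_I}(\Phi^+_{A_{i_1}B}\otimes\mathbbm{1}_{A_I\setminus A_{i_1}})\Pi_{A_I}$ is proportional to a projection with common non-zero eigenvalue $(d+M-1)/(dM)$ is also correct — it follows from $\Tr_{A_{i_1}}\Pi_{A_I}=\tfrac{d+M-1}{M}\Pi_{A_I\setminus A_{i_1}}$ and $\Pi_{A_I}(\mathbbm{1}_{A_{i_1}}\otimes\Pi_{A_I\setminus A_{i_1}})=\Pi_{A_I}$, and is consistent with the unit trace — though, as you say, none of that sharpness is needed.
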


\begin{proof}
From the arguments made in the proof of Proposition \ref{prop:bar r}, $\eta^I_{A^NB}$ has $d[M-1]d^{N-M}$ non-zero eigenvalues, and its eigenvectors are given in the form
\begin{align}\label{eq:eigen vector}
   \Pi_{A_I}\ket{\Phi^+}_{A_{i_1}B}\ket{k_1\cdots k_{M-1}}_{A_I\setminus A_{i_1}}\ket{k_M\cdots k_{N-1}}_{A_I^c},
\end{align}
where $k_1,\ldots,k_{N-1}\in\{1,\ldots,d\}$ and $\{\ket{k}\}_{k=1}^d$ is the orthonormal basis of $\mathcal{H}$. If there are $n$ different permutations to rearrange $k_1,\ldots, k_{M-1}$ without distinguishing the same numbers, the eigenvalue corresponding to \eqref{eq:eigen vector} is $n\frac{d^{M-N+1}}{d[M]}$. Since $n\leq(M-1)!$ always holds, all $d[M-1]d^{N-M}$ nonzero eigenvalues are less than or equal to $(M-1)!\frac{d^{M-N+1}}{d[M]}$. Hence,
\begin{align}
  \Tr\left[(\eta^I_{A^NB})^2\right]&\leq d[M-1]d^{N-M}\left((M-1)!\frac{d^{M-N+1}}{d[M]}\right)^2\nonumber
  \\&=\frac{d^{M-N+2}}{d[M]}\frac{M!(M-1)!}{d+M-1}.
\end{align}
\end{proof}

\begin{lem}\label{lem:value of Tr when I cap J = 0}
 Let $I,J\in\mathcal{I}_N^M$. If $I\cap J=\emptyset$, then $\Tr\big[\eta^I_{A^NB}\eta^J_{A^NB}\big]=1/d^{N+1}$.
\end{lem}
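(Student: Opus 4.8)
The plan is to start from the simplified form of the signal state derived in the proof of Proposition~\ref{prop:bar r}, namely
\begin{align}
\eta^I_{A^NB}=c\,\big[\Pi_{A_I}(\Phi^+_{A_{i_1}B}\otimes\mathbbm{1}_{A_I\setminus A_{i_1}})\Pi_{A_I}\big]\otimes\mathbbm{1}_{A_I^c},\qquad c\coloneq\frac{d^{M-N+1}}{d[M]},
\end{align}
and the analogous expression for $J$, whose bracketed factor I call $Q_{A_JB}$. When $I\cap J=\emptyset$ the ports split into three disjoint blocks $A_I$, $A_J$, and the remaining $N-2M$ ports $A_{(I\cup J)^c}$, and both operators act as the identity on the last block. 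First I would pull that identity out of the trace, producing a factor $d^{N-2M}$, and then exploit the fact that the bracketed operators $P_{A_IB}$ and $Q_{A_JB}$ overlap only on the system $B$. Expanding each into a sum of product operators shows that the remaining trace factorizes,
\begin{align}
\Tr_{A_IA_JB}\big[(P_{A_IB}\otimes\mathbbm{1}_{A_J})(Q_{A_JB}\otimes\mathbbm{1}_{A_I})\big]=\Tr_B\big[\tilde P_B\,\tilde Q_B\big],
\end{align}
where $\tilde P_B\coloneq\Tr_{A_I}P_{A_IB}$ and $\tilde Q_B\coloneq\Tr_{A_J}Q_{A_JB}$ live on $B$ alone.

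The central step is to evaluate $\tilde P_B$. I would argue it is proportional to the identity by covariance: the maximally entangled state satisfies $(U_{A_{i_1}}\otimes\bar U_B)\ket{\Phi^+}=\ket{\Phi^+}$, while $\Pi_{A_I}$ and $\mathbbm{1}_{A_I\setminus A_{i_1}}$ commute with $U^{\otimes M}_{A_I}$, so $(U^{\otimes M}_{A_I}\otimes\bar U_B)P_{A_IB}(U^{\otimes M}_{A_I}\otimes\bar U_B)^\dagger=P_{A_IB}$; tracing out $A_I$ then gives $\bar U_B\tilde P_B\bar U_B^\dagger=\tilde P_B$ for all $U$, whence $\tilde P_B=\lambda\mathbbm{1}_B$. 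The constant is fixed by the full trace, $\lambda=\tfrac{1}{d}\Tr_{A_IB}P_{A_IB}$, and using $\Pi_{A_I}^2=\Pi_{A_I}$ with cyclicity under the partial trace, $\Tr_{A_IB}P_{A_IB}=\Tr[\Pi_{A_I}(\Phi^+_{A_{i_1}B}\otimes\mathbbm{1}_{A_I\setminus A_{i_1}})]$.

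To compute this scalar I would trace out $B$ first using $\Tr_B\Phi^+_{A_{i_1}B}=\tfrac{1}{d}\mathbbm{1}_{A_{i_1}}$, which leaves $\tfrac{1}{d}\Tr[V_\sigma]$ for each $\sigma\in S_I$; since $\Tr[V_\sigma]=d^{(\text{number of cycles of }\sigma)}$ on $(\mathbbm{C}^d)^{\otimes M}$, the symmetric-group identity $\sum_{\sigma\in S_M}d^{(\text{number of cycles of }\sigma)}=M!\,d[M]$ (with $S_I\cong S_M$) gives $\Tr_{A_IB}P_{A_IB}=d[M]/d$ and hence $\lambda=d[M]/d^2$, with the same value for $\tilde Q_B$. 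Substituting back,
\begin{align}
\Tr\big[\eta^I_{A^NB}\eta^J_{A^NB}\big]=c^2\,d^{N-2M}\,\Tr_B\big[\tilde P_B\tilde Q_B\big]=\frac{d^{2M-2N+2}}{d[M]^2}\,d^{N-2M}\,\frac{d[M]^2}{d^3}=\frac{1}{d^{N+1}},
\end{align}
as claimed.

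I expect the proportionality-to-identity step and the cycle-counting identity to be the only delicate points; everything else is bookkeeping of powers of $d$ together with the factor $d[M]$, which I have arranged to cancel exactly above.
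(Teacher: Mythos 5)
Your proof is correct, and it takes a genuinely different route from the paper's. The paper expands $\Tr\big[\Pi_{A_I}\rho^{i_1}\Pi_{A_J}\rho^{j_1}\big]$ in the computational basis, reduces it to a double sum over $\sigma\in S_I$, $\tau\in S_J$ of products of Kronecker deltas (the Appendix computation), decomposes both permutations into cycles to count the surviving indices as $l(\sigma)+l(\tau)-1$, and finally sums $d^{l(\sigma)+l(\tau)-1}$ via the Stirling-number identity. You instead observe that for $I\cap J=\emptyset$ the two operators overlap only on $B$, so the trace factorizes through the partial traces $\tilde P_B=\Tr_{A_I}P_{A_IB}$ and $\tilde Q_B=\Tr_{A_J}Q_{A_JB}$; the $U\otimes\bar U$ invariance of $\Phi^+$ together with $[U^{\otimes M},\Pi_{A_I}]=0$ then forces $\tilde P_B\propto\mathbbm{1}_B$ by Schur's lemma, and the constant is fixed by $\Tr\Pi_{A_I}=d[M]$. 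All the steps check out: the factorization identity $\Tr[(P\otimes\mathbbm{1}_{A_J})(Q\otimes\mathbbm{1}_{A_I})]=\Tr_B[\tilde P_B\tilde Q_B]$ is valid, the covariance argument is sound, and the powers of $d$ cancel as you claim. Your cycle-counting identity $\sum_{\sigma\in S_M}d^{\#\mathrm{cycles}(\sigma)}=M!\,d[M]$ is the same Stirling identity the paper invokes in its Remark, but in your argument it enters only as the statement $\Tr\Pi_{A_I}=\dim\mathrm{Sym}^M(\mathbbm{C}^d)=d[M]$, which needs no cycle combinatorics at all. What your approach buys is brevity and conceptual clarity --- it bypasses the Appendix entirely and makes it transparent that the symmetrization is washed out by the partial trace, so the answer coincides with the unsymmetrized value $\Tr[\rho^{i_1}\rho^{j_1}]=1/d^{N+1}$; what it costs is that it exploits the disjointness $I\cap J=\emptyset$ in an essential way (the factorization through $B$ fails when the supports overlap), so unlike the paper's basis expansion it does not suggest a template for the intermediate cases $0<|I\cap J|<M$ --- though the paper does not compute those either, bounding them instead via Lemma~\ref{lem:Tr takes maximum value when I=J}.
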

\begin{proof}
Suppose $I=\{i_1,\ldots,i_M\}$ and $J=\{j_1,\ldots,j_M\}$. When $I\cap J=\emptyset$, we have
\begin{align}
    \big[\Pi_{A_I},\Pi_{A_J}\big]=\big[\Pi_{A_I},\rho^{j_1}_{A^NB}\big]=\big[\Pi_{A_J},\rho^{i_1}_{A^NB}\big]=0.
\end{align} 
Thus,
\begin{align}\label{eq:Tr}
    \Tr\left[\eta^I_{A^NB}\eta^J_{A^NB}\right]=\frac{d^{2M}}{(d[M])^2}\Tr\left[\Pi_{A_I}\rho^{i_1}_{A^NB}\Pi_{A_J}\rho^{j_1}_{A^NB}\right].
\end{align}
Therefore, with some calculations we obtain
\begin{align}
      &\Tr\left[\eta^I_{A^NB}\eta^J_{A^NB}\right]\nonumber
      \\=&\frac{1}{(d[M]M!)^2d^N}\sum_{\sigma\in S_I}\sum_{\tau\in S_J}\sum_{k_{i_1},\ldots,k_{i_M}=1}^d\sum_{l_{j_1},\ldots,l_{j_M}=1}^d\nonumber
      \\&\times\delta_{l_{j_1},k_{\sigma^{-1}(i_1)}}\delta_{k_{i_1},l_{\tau^{-1}(j_1)}}\prod_{p=2}^M\delta_{k_{i_p},k_{\sigma^{-1}(i_p)}}\delta_{l_{j_p},l_{\tau^{-1}(j_p)}}.\label{eq:appendix}
\end{align}
For a detailed derivation of \eqref{eq:appendix}, see the Appendix.
To calculate \eqref{eq:appendix}, we decompose $\sigma$ and $\tau$ into cycles. Suppose $\sigma\in S_I$ can be decomposed into $\sigma=C_1^\sigma C_2^\sigma\cdots C_{l(\sigma)}^\sigma$ (including cycles with a single element), and let $c_m^\sigma$ represent the length of the cycle $C_m^\sigma$ (with similar notation for $\tau\in S_J$). Then, by definition, 
\begin{align}
    \sum_{m=1}^{l(\sigma)}c_m^\sigma=\sum_{m=1}^{l(\tau)}c_m^\tau=M.
\end{align}
Furthermore, we denote the elements of the cycle as $C_m^\sigma=(i^m_1\ \cdots\ i^m_{c^\sigma_i})$ and $C_m^\tau=(j^m_1\ \cdots\ j^m_{c^\tau_j})$. Without loss of generality, let $i_1=i_1^1$ and $j_1=j_1^1$. When $c^\sigma_1\neq 1$ and $c^\tau_1\neq 1$, each summand in \eqref{eq:appendix} can be expressed as follows:
\begin{align}\label{eq:after}
&\delta_{l_{j_1},k_{\sigma^{-1}(i_1)}}\delta_{k_{i_1},l_{\tau^{-1}(j_1)}}\prod_{p=2}^M\delta_{k_{i_p},k_{\sigma^{-1}(i_p)}}\delta_{l_{j_p},j_{\tau^{-1}(j_p)}}\nonumber
\\=&\left(\delta_{l_{j^1_1},k_{\sigma^{-1}\left(i^1_1\right)}}\delta_{k_{i^1_1},l_{\tau^{-1}\left(j^1_1\right)}}\prod_{p=2}^{c^\sigma_1}\delta_{k_{i^1_p},k_{\sigma^{-1}\left(i^1_p\right)}}\prod_{q=2}^{c^\tau_1}\delta_{l_{j^1_q},l_{\tau^{-1}\left(j^1_q\right)}}\right)\nonumber
\\&\times\left(\prod_{m=2}^{l(\sigma)}\prod_{n=1}^{c^\sigma_m}\delta_{k_{i^m_n},k_{\sigma^{-1}\left(i^m_n\right)}}\right)\left(\prod_{s=2}^{l(\tau)}\prod_{t=1}^{c^\tau_s}\delta_{l_{j^s_t},l_{\tau^{-1}(j^s_t)}}\right).
\end{align}
When $c^\sigma_1=1$ or $c^\tau_1=1$, $\prod_{p=2}^{c^\sigma_1}\delta_{k_{i^1_p},k_{\sigma^{-1}\left(i^1_p\right)}}$  or $\prod_{q=2}^{c^\tau_1}\delta_{l_{j^1_q},l_{\tau^{-1}\left(j^1_q\right)}}$  in \eqref{eq:after} is disregarded, respectively. The right-hand side of \eqref{eq:after} corresponds to cycles $C^\sigma_1$ and $C^\tau_1$ in the first parentheses, $C^\sigma_2,\ldots,C^\sigma_{l(\sigma)}$ in the second parentheses, and $C^\tau_2,\ldots,C^\tau_{l(\tau)}$ in the third parentheses. When we sum \eqref{eq:after} over $k_{i_1},\ldots,k_{i_M}$ and $l_{j_1},\ldots,l_{j_M}$, the first parentheses eliminate $c^\sigma_1+c^\tau_1-1$ indices. The second parentheses eliminate $c^\sigma_m-1$ indices for fixed $m\in\{2,\ldots,l(\sigma)\}$, for a total of $\sum_{m=2}^{l(\sigma)}(c^\sigma_m-1)$ indices. Similarly, the third parentheses eliminate $\sum_{s=2}^{l(\tau)}(c^\tau_s-1)$ indices. Consequently, the total number of eliminated indices for fixed $\sigma\in S_I,\tau\in S_J$ is
\begin{align}
&(c^\sigma_1+c^\tau_1-1)+\sum_{m=2}^{l(\sigma)}(c^\sigma_m-1)+\sum_{s=2}^{l(\tau)}(c^\tau_s-1)\nonumber
\\=&\sum_{k=1}^{l(\sigma)}c_k^\sigma+\sum_{k=1}^{l(\tau)}c_k^\tau-1-[l(\sigma)-1]-[l(\tau)-1]\nonumber
  \\=&2M+1-l(\sigma)-l(\tau).
\end{align}
Since there were $2M$ indices of $k_{i_1},\ldots,k_{i_M}$ and $l_{j_1},\ldots,l_{j_M}$ at the beginning, the remaining indices are
\begin{align}
  2M-\left[2M+1-l(\sigma)-l(\tau)\right]=l(\sigma)+l(\tau)-1.
\end{align}
Hence,
  \begin{align}
    &\sum_{k_{i_1},\ldots,k_{i_M}=1}^d\sum_{l_{j_1},\ldots,l_{j_M}=1}^d\delta_{l_{j_1},k_{\sigma^{-1}(i_1)}}\delta_{k_{i_1},l_{\tau^{-1}(j_1)}}\nonumber
    \\&\times\prod_{p=2}^M\delta_{k_{i_p},k_{\sigma^{-1}(i_p)}}\delta_{l_{j_p},l_{\tau^{-1}(j_p)}}\nonumber
    \\=&\ d^{l(\sigma)+l(\tau)-1}.
  \end{align}
Here, the number of $\sigma\in S_M$ satisfying $l(\sigma)=k$ is given by the first kind of Stirling number $\sterling{M}{k}$ (see Remark~\ref{rem:Stirling numbers} for details). Therefore,
\begin{align}
  \sum_{\sigma\in S_I}\sum_{\tau\in S_J}d^{l(\sigma)+l(\tau)-1}&=\left(\sum_{\sigma\in S_I}d^{l(\sigma)}\right)\left(\sum_{\tau\in S_J}d^{l(\tau)}\right)d^{-1}\nonumber
  \\&=\left(\sum_{k=1}^M\sterling{M}{k}d^k\right)^2d^{-1}\nonumber
  \\&=\frac{1}{d}\left(\frac{(M+d-1)!}{(d-1)!}\right)^2.\label{eq:Stirling}
\end{align}
Thus,
\begin{align}
  \Tr\left[\eta^I_{A^NB}\eta^J_{A^NB}\right]&=\frac{1}{(d[M]M!)^2d^N}\frac{1}{d}\left(\frac{(M+d-1)!}{(d-1)!}\right)^2\nonumber
  \\&=\frac{1}{d^{N+1}}.
\end{align}
\end{proof}

\begin{rem}\label{rem:Stirling numbers}
  The first kind of Stirling number $\sterling{n}{k}$ is defined as the coefficient of $x^k$ in the expansion of the rising factorial
  \begin{align}
    x^{\bar{n}}\coloneq x(x+1)\cdots(x+n-1)
  \end{align}
  as a power series in $x$:
  \begin{align}
    x^{\bar{n}}=\sum_{k=0}^n\sterling{n}{k}x^k.
  \end{align}
  It is known that $\sterling{n}{k}$ gives the number of ways to decompose a set of $n$ elements into $k$ cycles. The following relationship was used in \eqref{eq:Stirling}:
  \begin{align}
    \sum_{k=0}^M\sterling{M}{k}d^k&=d^{\bar{M}}=d(d+1)\cdots(d+M-1)\nonumber
    \\&=\frac{(M+d-1)!}{(d-1)!}.
  \end{align}
\end{rem}

\begin{prop}\label{prop:asymptotic value of bar eta}
  For $\bar{\eta}_{A^NB}={\binom{N}{M}}^{-1}\sum_{I\in\mathcal{I}_N^M}\eta^I_{A^NB}$, the following holds:
  \begin{align}\label{eq:asymptotic value of bar eta}
      \lim_{N\rightarrow\infty}d^{N+1}\Tr\left[\bar{\eta}_{A^NB}^2\right]=1.
  \end{align}
\end{prop}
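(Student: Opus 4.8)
The plan is to expand the Hilbert--Schmidt norm $\Tr[\bar\eta_{A^NB}^2]$ into a double sum over pairs $(I,J)\in\mathcal{I}_N^M\times\mathcal{I}_N^M$ and to organize it by the overlap parameter $t=|I\cap J|$. From $\bar\eta_{A^NB}=\binom{N}{M}^{-1}\sum_I\eta^I_{A^NB}$ one has $\Tr[\bar\eta_{A^NB}^2]=\binom{N}{M}^{-2}\sum_{I,J}\Tr[\eta^I_{A^NB}\eta^J_{A^NB}]$. By Lemma~\ref{lem:Tr depends on multiplicity of alphabets} each summand depends only on $t$, so writing $T_t$ for that common value and counting the $\binom{N}{M}\binom{M}{t}\binom{N-M}{M-t}$ ordered pairs of fixed overlap collapses the double sum to
\[
\Tr[\bar\eta_{A^NB}^2]=\binom{N}{M}^{-1}\sum_{t}\binom{M}{t}\binom{N-M}{M-t}\,T_t,
\]
where for large $N$ the index $t$ runs over $0,1,\ldots,M$.

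First I would isolate the disjoint term $t=0$. Lemma~\ref{lem:value of Tr when I cap J = 0} gives $T_0=1/d^{N+1}$, so after multiplying by $d^{N+1}$ this term reduces to the purely combinatorial factor $\binom{N}{M}^{-1}\binom{N-M}{M}$, which tends to $1$ as $N\to\infty$ since both binomials are asymptotic to $N^M/M!$. This single term already reproduces the claimed limit, so everything else must be shown negligible.

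Next I would argue that every term with $t\geq1$ is asymptotically suppressed after multiplication by $d^{N+1}$. Combining Lemma~\ref{lem:Tr takes maximum value when I=J} and Lemma~\ref{lem:upper bound of Tr when I=J} bounds $T_t\leq\Tr[(\eta^I_{A^NB})^2]\leq C\,d^{-N}$ with a constant $C$ independent of $N$ (the factor $d^{M-N+2}$ carries the only $N$-dependence). Hence $d^{N+1}T_t\leq dC$, while the prefactor $\binom{N}{M}^{-1}\binom{N-M}{M-t}$ behaves like $\tfrac{M!}{(M-t)!}N^{-t}\to0$ for each $t\geq1$. Since there are only finitely many values of $t$ and every $\Tr[\eta^I_{A^NB}\eta^J_{A^NB}]$ is non-negative (being the trace of a product of positive semidefinite operators), I can squeeze: the $t=0$ contribution converges to $1$, and the remaining sum is sandwiched between $0$ and a quantity tending to $0$. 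This yields $\lim_{N\to\infty}d^{N+1}\Tr[\bar\eta_{A^NB}^2]=1$.

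The routine but slightly delicate part will be the asymptotics of the binomial ratios $\binom{N}{M}^{-1}\binom{N-M}{M-t}$; the conceptual heart of the argument is simply that among all pairs $(I,J)$ the generic one is disjoint, so the disjoint value $T_0=1/d^{N+1}$ dictates the leading order, whereas overlapping pairs---although each is individually of the same order $d^{-N}$---are too scarce, by a factor $N^{-t}$, to contribute in the limit.
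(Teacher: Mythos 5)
Your proof is correct and follows essentially the same route as the paper: decompose $\Tr[\bar\eta_{A^NB}^2]$ by the overlap $|I\cap J|$, use Lemma~\ref{lem:value of Tr when I cap J = 0} to show the disjoint pairs contribute exactly the combinatorial ratio $\binom{N-M}{M}\binom{N}{M}^{-1}\to1$, and use Lemmas~\ref{lem:Tr takes maximum value when I=J} and~\ref{lem:upper bound of Tr when I=J} together with nonnegativity to squeeze away the overlapping pairs. The only cosmetic difference is that you keep the exact count $\binom{N}{M}\binom{M}{t}\binom{N-M}{M-t}$ for each overlap class $t\geq1$ separately, whereas the paper lumps all $t\geq1$ into a single remainder $\binom{N}{M}^2-m(0)$; both yield the same conclusion.
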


\begin{proof}
Let $m(k)\ (0\leq k\leq M)$ be the number of pairs $(I,J)\ (I,J\in\mathcal{I}_N^M)$ that satisfy $|I\cap J|=k$, and let $\nobreak{f(k)=\Tr\big[\eta^I_{A^NB}\eta^J_{A^NB}\big]}$ when $|I\cap J|=k$. Note from Lemma \ref{lem:Tr depends on multiplicity of alphabets} that $f(k)$ depends only on $k$. Then, 
\begin{align}
&d^{N+1}\Tr\left[\bar{\eta}_{A^NB}^2\right]\nonumber
\\=&d^{N+1}\binom{N}{M}^{-2}\sum_{I,J\in\mathcal{I}_N^M}\Tr\left[\eta^I_{A^NB}\eta^J_{A^NB}\right]\nonumber
\\=&d^{N+1}\binom{N}{M}^{-2}\sum_{k=0}^Mm(k)f(k)\nonumber
\\\leq&d^{N+1}\binom{N}{M}^{-2}\left(m(0)f(0)+\left(\sum_{k=1}^Mm(k)\right)f(M)\right)\nonumber
\\=&d^{N+1}\left(\binom{N-M}{M}\binom{N}{M}^{-1}f(0)\right.\nonumber
\\&\left.+\left(1-\binom{N-M}{M}\binom{N}{M}^{-1}\right)f(M)\right)\nonumber
\\\leq& d^{N+1}\left(\binom{N-M}{M}\binom{N}{M}^{-1}\frac{1}{d^{N+1}}\right.\nonumber
\\&\left.+\left(1-\binom{N-M}{M}\binom{N}{M}^{-1}\right)\frac{d^{M-N+2}}{d[M]}\frac{M!(M-1)!}{d+M-1}\right)\nonumber
\\=&\binom{N-M}{M}\binom{N}{M}^{-1}+\left(1-\binom{N-M}{M}\binom{N}{M}^{-1}\right)\nonumber
\\&\times\frac{d^{M+3}}{d[M]}\frac{M!(M-1)!}{d+M-1}.
\end{align}
The first inequality uses Lemma \ref{lem:Tr takes maximum value when I=J}, and the second inequality uses Lemmas \ref{lem:upper bound of Tr when I=J} and \ref{lem:value of Tr when I cap J = 0}. Since
\begin{align}
  \lim_{N\rightarrow\infty}\binom{N-M}{M}\binom{N}{M}^{-1}=1
\end{align}
holds for finite $M$, we obtain
\begin{align}
    \lim_{N\rightarrow\infty}\left|d^{N+1}\Tr\left[\bar{\eta}_{A^NB}^2\right]-1\right|=0.
\end{align}
\end{proof}

Finally, we prove Theorem \ref{thm:main result}.
\begin{proof}[Proof of Theorem \ref{thm:main result}]
Let $\{E^I_{XA^N}\}_{I\in\mathcal{I}_N^M}$ be the PGM for $\{(|\mathcal{I}_N^M|^{-1},\eta^I_{A^NB})\}_{I\in\mathcal{I}_N^M}$. From Proposition \ref{prop:P invariance of E}, the following holds:
\begin{align}
    p_\mathrm{succ}\coloneq&{\binom{N}{M}}^{-1}\sum_{I\in\mathcal{I}_N^M}\Tr\left[E^{I}_{A^NB}\eta^{I}_{A^NB}\right]\nonumber
    \\=&{\binom{N}{M}}^{-1}\frac{d^M}{d[M]}\sum_{I\in\mathcal{I}_N^M}\Tr\left[E^{I}_{A^NB}\rho^{i_1}_{A^NB}\right].
\end{align}
Thus, we obtain
\begin{align}
    \sum_{I\in\mathcal{I}_N^M}\Tr\left[E^{I}_{A^NB}\rho^{i_1}_{A^NB}\right]=\frac{ d[M]}{d^M}\binom{N}{M}p_\mathrm{succ}.
\end{align}
Therefore,
\begin{align}
    &F(\mathcal{R}\circ\mathcal{D}^\mathrm{std}_{N,M})\nonumber
    \\=&\frac{1}{d^2}\sum_{I\in\mathcal{I}_N^M}\Tr\left[E^I_{A^NB}\rho^{i_1}_{A^NB}\right]\nonumber
    \\=&\frac{1}{d^2}\frac{d[M]}{d^M}\binom{N}{M}p_\mathrm{succ}\nonumber
    \\\geq&\frac{d[M]}{d^{M+2}}\binom{N}{M}\binom{N}{M}^{-1}\frac{1}{d[M-1]d^{N-M}}\frac{1}{\Tr\left[\bar{\eta}_{A^NB}^2\right]}\nonumber
    \\=&\frac{d+M-1}{dM}\frac{1}{d^{N+1}\Tr\left[\bar{\eta}_{A^NB}^2\right]}.
\end{align}
The first equality uses Corollary \ref{cor:F of PBTC}, and the first inequality uses Lemma \ref{lem:general lower bound of PGM} and Proposition \ref{prop:bar r}. From Proposition \ref{prop:asymptotic value of bar eta},
\begin{align}
    \lim_{N\rightarrow\infty}F(\mathcal{R}\circ\mathcal{D}^\mathrm{std}_{N,M})\geq\frac{d+M-1}{dM}.
\end{align}
Thus, from \eqref{eq:f and F},
\begin{align}
    \lim_{N\rightarrow\infty}f(\mathcal{R}\circ\mathcal{D}^\mathrm{std}_{N,M})\geq\frac{d+2M-1}{M(d+1)}.
\end{align}
On the other hand, since the fidelity of symmetric cloning is upper bounded by \eqref{eq:optimal fidelity}, the equality holds.
\end{proof}

\section{Conclusion}\label{sec:Conclusion}
In this paper, we introduced port-based telecloning, a variant of telecloning that uses PBT instead of conventional teleportation.
To achieve this, we constructed a new POVM by partially symmetrizing the state that constitutes the optimal POVM for PBT. We then demonstrated that the PBTC protocol we constructed can asymptotically distribute optimal clones to many receivers. Furthermore, numerical calculations showed that, at least in the case of few ports, PBTC outperforms the naive clone-and-teleport protocol.

There are several open questions about PBTC. The first is finding an optimal POVM for PBTC with finite $N$. We showed that the POVM we introduced achieves an optimal value in the limit  $N\rightarrow\infty$, but its optimality for finite $N$ has not been clarified. In previous research~\cite{studzinski2017port,mozrzymas2018optimal,leditzky2022optimality}, the optimal POVM in PBT was derived using semidefinite programming and representation theory. Since PBTC additionally requires the condition to be a symmetric cloning, the proof done in PBT cannot be directly applied to PBTC, but it is expected that a similar method can be used. In addition, since the results of this study were obtained for the maximally entangled resource states, the optimization of a resource state can also be considered. It could increase the efficiency of PBTC since in the optimized PBT we have a square improvement of fidelity in $N$. Also, we considered only the deterministic PBTC, but the study of a probabilistic version presents an additional challenge.
Regarding the numerical study, it would be interesting to verify whether the performance gap between our PBTC and the naive version persists for larger numbers of ports.

\section*{Acknowledgement}
We thank S. Strelchuk for helpful comments. K.K. acknowledges support from JSPS Grant-in-Aid for Early-Career Scientists No. 22K13972 and from MEXT-JSPS Grant-in-Aid for Transformative Research Areas (B) No. 24H00829. F.B. acknowledges support from MEXT Quantum Leap Flagship Program (MEXT QLEAP) Grant No.~JPMXS0120319794, from MEXT-JSPS Grant-in-Aid for Transformative Research Areas (A) ``Extreme Universe'' Grant No.~21H05183, and from JSPS KAKENHI Grant No.~23K03230.

\bibliography{main}

\clearpage
\onecolumngrid
\appendix
\section{Detailed derivation of Equation \eqref{eq:appendix}}\label{appendix}
In this appendix, we give a detailed derivation of Eq. \eqref{eq:appendix}.
At first, for $I\in\mathcal{I}_N^M=\{i_1,\ldots,i_M\}$,
\begin{align}
    \Pi_{A_I}\rho^{i_1}_{A^NB}&=\frac{1}{M!d^N}\sum_{\sigma\in S_I}\sum_{k_{i_1},\ldots,k_{i_M},k'_{i_1}=1}^d V_\sigma\ket{k_{i_1}k_{i_2}\cdots k_{i_M}}\bra{k'_{i_1}k_{i_2}\cdots k_{i_M}}_{A_I}\otimes\ket{k_{i_1}}\bra{k'_{i_1}}_B\otimes\mathbbm{1}_{A^N\backslash A
    _I}\nonumber
    \\&=\frac{1}{M!d^N}\sum_{\sigma\in S_I}\sum_{k_{i_1},\ldots,k_{i_M},k'_{i_1}=1}^d \ket{k_{\sigma^{-1}(i_1)}k_{\sigma^{-1}(i_2)}\cdots k_{\sigma^{-1}(i_M)}}\bra{k'_{i_1}k_{i_2}\cdots k_{i_M}}_{A_I}\otimes\ket{k_{i_1}}\bra{k'_{i_1}}_B\otimes\mathbbm{1}_{A^N\backslash A_I}.
\end{align}
Likewise, for $J=\{j_1,\ldots,j_M\}\in\mathcal{I}_N^M$,
\begin{align}
    \Pi_{A_J}\rho^{j_1}_{A^NB}=\frac{1}{M!d^N}\sum_{\tau\in S_J}\sum_{l_{j_1},\ldots,l_{j_M},l'_{j_1}=1}^d \ket{l_{\tau^{-1}(j_1)}l_{\tau^{-1}(j_2)}\cdots l_{\tau^{-1}(j_M)}}\bra{l'_{j_1}l_{j_2}\cdots l_{j_M}}_{A_J}\otimes\ket{l_{j_1}}\bra{l'_{j_1}}_B\otimes\mathbbm{1}_{A^N\backslash A_J}.
\end{align}
Hence, when $I\cap J=\emptyset$,
\begin{align}
    &\Tr\left[\Pi_{A_I}\rho^{i_1}_{A^NB}\Pi_{A_J}\rho^{j_1}_{A^NB}\right]\nonumber
    \\=&\frac{1}{(M!)^2d^{2N}}\sum_{\sigma\in S_I}\sum_{\tau\in S_J} \sum_{k_{i_1},\ldots,k_{i_M},k'_{i_1}=1}^d\sum_{l_{j_1},\ldots,l_{j_M},l'_{j_1}=1}^d \Tr\Big[\ket{k_{\sigma^{-1}(i_1)}k_{\sigma^{-1}(i_2)}\cdots k_{\sigma^{-1}(i_M)}}\bra{k'_{i_1}k_{i_2}\cdots k_{i_M}}_{A_I}\nonumber
    \\&\otimes\ket{l_{\tau^{-1}(j_1)}l_{\tau^{-1}(j_2)}\cdots l_{\tau^{-1}(j_M)}}\bra{l'_{j_1}l_{j_2}\cdots l_{j_M}}_{A_J}\otimes\ket{k_{i_1}}\braket{k'_{i_1}|l_{j_1}}\bra{l'_{j_1}}_B\otimes\mathbbm{1}_{A^N\backslash A_IA_J}\Big]\nonumber
    \\=&\frac{1}{(M!)^2d^{N+2M}}\sum_{\sigma\in S_I}\sum_{\tau\in S_J} \sum_{k_{i_1},\ldots,k_{i_M},k'_{i_1}=1}^d\sum_{l_{j_1},\ldots,l_{j_M},l'_{j_1}=1}^d \nonumber
    \\&\times\delta_{k'_{i_1},k_{\sigma^{-1}(i_1)}}\delta_{k_{i_2},k_{\sigma^{-1}(i_2)}}\cdots\delta_{k_{i_M},k_{\sigma^{-1}(i_M)}}\ \delta_{l'_{j_1},l_{\tau^{-1}(j_1)}}\delta_{l_{j_2},l_{\tau^{-1}(j_2)}}\cdots\delta_{l_{j_M},l_{\tau^{-1}(j_M)}}\ \delta_{k_{i_1},l'_{j_1}}\ \delta_{k'_{i_1},l_{j_1}}\nonumber
    \\=&\frac{1}{(M!)^2d^{N+2M}}\sum_{\sigma\in S_I}\sum_{\tau\in S_J} \sum_{k_{i_1},\ldots,k_{i_M}=1}^d\sum_{l_{j_1},\ldots,l_{j_M}=1}^d \delta_{l_{j_1},k_{\sigma^{-1}(i_1)}}\delta_{k_{i_1},l_{\tau^{-1}(j_1)}}\prod_{p=2}^M\delta_{k_{i_p},k_{\sigma^{-1}(i_p)}}\delta_{l_{j_p},l_{\tau^{-1}(j_p)}}.
\end{align}
Therefore, from \eqref{eq:Tr}, we obtain \eqref{eq:appendix}.

\end{document}